\documentclass[12pt, a4paper]{article}
\usepackage{amsfonts, amssymb}
\usepackage{graphicx}
\usepackage{latexsym,amsmath,color,array}

%%%
\usepackage{natbib}
\usepackage{enumerate}
\usepackage{bbding}
\usepackage{amssymb}
\usepackage{amsmath}
\usepackage{graphicx}
\usepackage{amsmath}
\usepackage{bbm}
\usepackage{mathtools}
 \usepackage{color}
 \usepackage{array}
 \usepackage{subfigure}
 \usepackage{multirow}
 \usepackage[dvipsnames]{xcolor}
 \usepackage{makecell}
 \usepackage{colortbl}
%%%

\topmargin -0.5cm \oddsidemargin 0cm \evensidemargin 0.0cm
\textwidth 16cm \textheight 21.7cm \footskip 1cm \topskip -3.5cm
%\raggedbottom

\def\1{\mathbb{I}}

\newcommand{\Pn}{\mathbb{P}_n}
\newcommand{\Gn}{\mathbb{G}_n}
\newcommand{\R}{\mathbb{R}}

\newcounter{thm}[section]
\newcounter{appen}[section]
\newcounter{assum}[section]
\newcounter{rmk}[section]
\newcounter{lem}[section]

\newtheorem{theorem}[thm]{Theorem}
\newtheorem{lemma}[lem]{Lemma}
\newtheorem{remark}[rmk]{Remark}
\newtheorem{assumption}[assum]{Assumption}

\newenvironment{proof}[1][Proof]{\noindent \textbf{#1.}
}{\rule{0.5em}{0.5em}}

\setcounter{footnote}{0}

\begin{document}

\title{Semiparametric multi-parameter regression survival modelling}
\author{Kevin Burke\footnote{Department of Mathematics and Statistics, University of Limerick; kevin.burke@ul.ie} \hspace{3cm}
Frank Eriksson\footnote{Section of Biostatistics, University of Copenhagen, Denmark.} \hspace{3cm}
C. B. Pipper\footnote{Section of Biostatistics, University of Copenhagen, Denmark.} }
\date{}

\maketitle

\begin{abstract}
We consider a log-linear model for survival data, where both the location and scale parameters depend on covariates and the baseline hazard function is completely unspecified. This model provides the flexibility needed to capture many interesting features of survival data at a relatively low cost in model complexity. Estimation procedures are developed and asymptotic properties of the resulting estimators are derived using empirical process theory. Finally, a resampling procedure is developed to estimate the limiting variances of the estimators. The finite sample properties of the estimators are investigated by way of a simulation study, and a practical application to lung cancer data is illustrated.

\smallskip

{\bf Keywords.} Counting processes; Empirical processes; Log-linear failure time model; Multi-parameter regression; Semiparametric regression; Survival data.

\end{abstract}

\qquad

\newpage

\section{Introduction}
In the context of survival analysis, we often consider log-linear models of the form \\$\log T = \mu + \sigma e$, where $\mu$ and $\sigma$ are location and scale parameters, respectively, and $e$ is a random error with an assumed parametric distribution on the real numbers. The familiar accelerated failure time model then arises by setting $\mu = - \beta^T X$ where $\beta$ and $X$ are, respectively, vectors of regression coefficients and covariates (\citet[chap.~3]{kalbprent:2002} and \citet[chap.~6]{lawless:2003}. As discussed in \citet{burmac:2017}, taking a multi-parameter regression approach (i.e., allowing both $\mu$ and $\sigma$ to depend on covariates simultaneously) offers an intuitive and simple way of modelling complicated phenomena. For instance, the phenomenon of crossing survival curves is directly linked to the concentration of events at a given location which is governed by the scale parameter $\sigma$.

A limitation of fully parametric approaches is that the assumed baseline hazard may not always be realistic in practice. Thus, we propose to further extend the log-linear multi-parameter regression model by allowing the baseline hazard to vary freely. This semiparametric model therefore brings together the flexibility of multi-parameter regression with additional robustness afforded by relaxing the assumption of a parametric error distribution. The proposed extension not only generalises multi-parameter regression to semiparametric status but also generalises the semiparametric accelerated failure time model to multi-parameter regression status; the latter fact is noteworthy given that the semiparametric accelerated failure time model has been considered by many authors over the years \citep{miller:1976, prentice:1978, buckjam:1979, tsiatis:1990, ritov:1990, laiandying:1991, ying:1993, linetal:1998, jinetal:2003}; see \citet[chap.~8]{martscheike:2007} for a summary of developments in this area.

Other examples of semiparametric multi-parameter regression models, which differ from the model developed in this
paper, exist in the literature. \citet{chenjewell:2001} considered a model which combined the
semiparametric accelerated failure time model and
\citeauthor{cox:1972}'s (\citeyear{cox:1972}) proportional hazards
model and, therefore, had two regression components; the model also
contained the lesser-known accelerated hazards model
\citep{chenwang:2000} as a special case. \citet{scheikezhang:2002,
  scheikezhang:2003} developed a different hybrid model which
incorporated the Cox model and the Aalen model \citep{aalen:1980}
leading to two regression components. Somewhat closer to our work is
that of \citet{zenglin:2007} who considered transformation models with
a covariate-dependent scale parameter and, therefore, like us, had
regression components corresponding to location and scale. However,
whereas their transformation was unspecified with a parametric
baseline distribution, we, conversely, focus on the log-transformation
with an unspecified baseline distribution.

From a practical perspective, inference based on the semiparametric accelerated failure time model has historically been somewhat cumbersome. This is partly due to the non-smooth nature of the estimating equations involved, but, more importantly, the precision of the resulting estimators does not lend itself to direct (i.e., plug-in) estimation due to its intractability. However, with recent resampling techniques, this is no longer an obstacle, and, specifically, we adapt the method of \citet{zeng08} to our setting to facilitate inference for the regression coefficients. Moreover, we expand \citeauthor{zeng08}'s approach to obtain the variance of the cumulative hazard estimator, and combine this with modern empirical process theory which permits straightforward inference for any functional of interest without the need for resolving estimating equations.

\section{Model}

\subsection{Specification and interpretation}

In line with the classical formulation of the accelerated failure time
model (cf.~\citet[chap.~3]{kalbprent:2002}) we specify a
regression model for $\log T$ with $T$ denoting the failure time. In
particular, for the $i$th individual, $i=1,\ldots,n$, we assume that
$$
\log T_{i}=\mu_{i}+\sigma_{i}e_{i},
$$
where the location and scale parameters, $\mu_{i}$ and $\sigma_{i}$, are related to $p+q$ covariates,  $(X_{i}^T,Z_{i}^T)^T$, via
\begin{align*}
&\mu_{i}=-\beta^{T}X_{i},\\
&\sigma_{i}=\exp(-\gamma^{T}Z_{i}),
\end{align*}
where $\beta$ and $\gamma$ are vectors of regression coefficients. The error terms, $e_1,\ldots,e_n$, are assumed to be independent and identically distributed with
cumulative hazard function $A(\cdot)$ which will be unspecified in our work.

%Note that the survivor function for $T_i$ is given by
%\begin{align*}
%S_i(t) = \exp\left\{-A\left(\frac{\log t-\mu_i}{\sigma_i}\right)\right\}
%\end{align*}
%and, hence, the quantile function is

The conditional quantile function for this model is given by
\begin{align*}
Q_i(\pi) = \exp(\mu_i) Q_0(\pi)^{\sigma_i},
\end{align*}
where $\pi \in[0,1]$ and $\log\{Q_0(\pi)\}= A^{-1}\{-\log(1-\pi)\}$ is the quantile function for the error distribution, i.e., $Q_0(\pi)$ is the quantile function for a baseline individual. Consider individuals $i$ and $j$ whose respective $X$ and $Z$ vectors are denoted by $X_{i},X_{j},Z_{i},Z_{j}$.  The ratio of their quantile functions is then
\begin{align*}
\frac{Q_j(\pi)}{Q_i(\pi)} = \exp\{-\beta^{T}(X_{j}-X_{i})\}Q_0(\pi)^{\exp(-\gamma^T Z_{j})\,-\,\exp(-\gamma^T Z_{i})}.
\end{align*}
This quantile ratio provides insight into the interpretation of the location and scale regression coefficients and, indeed, can be used in practical applications to quantify the overall effect of a given covariate on lifetime. We immediately see that when $\gamma^T (Z_{j}-Z_{i})=0$, the quantile ratio reduces to the usual accelerated failure time constant, $\exp\{-\beta^{T}(X_{j}-X_{i})\}$, so that the effect of covariates is quantile-independent, i.e., it applies across the whole lifetime, and, for example, $\beta^{T}(X_{j}-X_{i})>0$ implies reduced lifetime. Hence, the proposed model directly extends the accelerated failure time model, providing a lack of fit test of accelerated failure time effects.

It is worth noting that, since $Q_0(\pi)$ is an increasing function on $[0, \infty[$~, the quantile ratio decreases with $\pi$ for $\gamma^T (Z_{j}-Z_{i}) >0$, increases  with $\pi$  for $\gamma^T (Z_{j}-Z_{i}) < 0$, and, in both cases, equals one for some $\pi$ value if $\lim_{\pi\rightarrow1}Q_0(\pi) = \infty$.  Therefore, when $\gamma^T (Z_{j}-Z_{i}) \ne 0$, the model implies crossing quantile functions and, hence, crossing survivor functions.

\subsection{Derivation of estimation equations\label{sec:esteq}}

We now reformulate the model in a counting process framework \citep{andersen93} to adopt potential right censoring, and to derive estimation equations based on the resulting intensity processes. For this we denote by $C_{i}$ the censoring time, $\tilde{T}_{i}=C_i\wedge T_i$ the observed event time, and $\Delta_i=I(T_i\leq C_i)$ the failure indicator.
With these quantities in place, the counting and at risk processes are, respectively, defined as
\begin{align*}
N_i(t)&=\Delta_iI(\log \tilde{T}_i\leq t),\\
Y_i(t)&=I(\log \tilde{T}_i\geq t).
\end{align*}

We note that the hazard rate of $\log T_{i} $ is given by
\begin{align*}
\alpha_i(t)&=\alpha\left\{\sigma_{i}^{-1}(t-\mu_{i})\right\}\sigma_{i}^{-1},
\end{align*}
where $\alpha$ denotes the derivative of $A$. Consequently, with independent right censoring \citep{andersen93}, the intensity process of $N_{i}(t)$ is given by
$Y_{i}(t)\alpha_{i}(t)$. Furthermore, for some monotone increasing function, $g_i$, the time-transformed counting process
\begin{align*}
N_i^*(t)&=N_i\{g_i(t)\}
\end{align*}
has intensity
\begin{align*}
\lambda_i^*(t)&=Y_i^*(t)g_i'(t)\alpha_i\{g_i(t)\}
\end{align*}
where $Y_i^*(t)=Y_i\{g_i(t)\}$. In particular, with
$g_i(t)=\sigma_{i}t+\mu_{i}$,  we have that $N_i^*(t)$
has intensity $Y_i^*(t)\alpha(t)$. This observation motivates a Nelson-Aalen type estimator of $A$, that is,  for a given value of $\theta=(\beta^T,\gamma^T)^T$, we estimate $A$ by
\begin{align*}
\hat{A}_{n}(t,\theta)&=\sum_{i=1}^n\int_{-\infty}^t\frac{dN_{i}^*(s)}{\sum_{j=1}^n Y_j^*(s)}.
\end{align*}

To estimate the regression parameters, $\theta$, we propose the use of a likelihood based approach. For given $A$, the score function for $\theta$ is given by
$$
 \sum_{i=1}^n\int_{-\infty}^{\infty}D_{\theta}\log\{\alpha_i(s)\}\{dN_i(s)-Y_i(s)\alpha_i(s)ds\}.\\
$$
where $D_{\theta} = (\partial/\partial\beta_1, \ldots, \partial/\partial\beta_p,\partial/\partial\gamma_1, \ldots, \partial/\partial\gamma_q,)^T$ is the gradient operator. By observing that
$$
D_{\theta}\log\{\alpha_i(s)\}=\left(\frac{\alpha'\{g_i^{-1}(s)\}}{\alpha\{g_i^{-1}(s)\}}\sigma_{i}^{-1}X_{i}^{T},\left[\frac{\alpha'\{g_i^{-1}(s)\}}{\alpha\{g_i^{-1}(s)\}}g_i^{-1}(s)+1\right]Z_{i}^{T}\right)^{T}
$$
we rewrite the score function as
$$
\sum_{i=1}^n\int_{-\infty}^{\infty}\left[\frac{\alpha'(u)}{\alpha(u)}\sigma_{i}^{-1}X_{i}^{T},\left\{\frac{\alpha'(u)}{\alpha(u)}u+1\right\}Z_{i}^{T}\right]^{T}\{dN_{i}^*(u)-Y_{i}^*(u)\alpha(u)du\}.
$$

To arrive at an operational estimation procedure, we modify this score
function as follows. Firstly, we substitute the quantities $\alpha'(u) / \alpha(u)$ and $\alpha'(u) u / \alpha(u) +1$ with known deterministic functions which we denote by $\rho_{\beta}(u)$ and $\rho_{\gamma}(u)$, respectively. Secondly, we replace
$\alpha(u)du$ by $d\hat{A}_{n}(u,\theta)$. Thirdly, we truncate
integration at an upper limit $\tau$, where there is still a positive
probability of being at risk.  Doing so, we arrive at the estimating
equations
\begin{align*}
\Psi_{n}(\theta)= \frac{1}{n}\sum_{i=1}^n\int_{-\infty}^{\tau}\left\{\rho_{\beta}(u)\sigma_{i}^{-1}X_{i}^{T},\rho_{\gamma}(u)Z_{i}^{T}\right\}^{T}\{dN_{i}^*(u)-Y_{i}^*(u)d\hat{A}_{n}(u,\theta)\},
\end{align*}
which, for ease of exposition in the developments to follow, is rewritten as
$$
\Psi_{n}\{\theta,\hat{\eta}_{n}(\cdot,\theta)\}= \frac{1}{n}\sum_{i=1}^{n}I(\varepsilon_{\theta i}\leq\tau)\rho(\varepsilon_{\theta i})\{(\sigma_{i}^{-1}X_{i}^{T},Z_{i}^{T})-\hat{\eta}_{n}(\varepsilon_{\theta i},\theta)\}^{T}\Delta_{i},
$$
where $\rho(u)$ is a  $(p+q)\times (p+q)$ diagonal matrix with diagonal elements given by $\rho_{\beta}(u)$ repeated $p$ times followed by $\rho_{\gamma}(u)$ repeated $q$ times, and
\begin{align*}
\varepsilon_{\theta i}&=\sigma_{i}^{-1}(\log \tilde{T}_{i}-\mu_{i}),\\
\hat{\eta}_{n}(u,\theta)&=\{\hat{\eta}_{n}^{\beta}(u,\theta),\hat{\eta}_{n}^{\gamma}(u,\theta)\},\\
\hat\eta_n^{\beta}(u,\theta)&=\frac{\sum_{j=1}^nY_j^*(u)\sigma_{j}^{-1}X_j^{T}}{\sum_{j=1}^nY_j^*(u)},\\
\hat\eta_n^{\gamma}(u,\theta)&=\frac{\sum_{j=1}^nY_j^*(u)Z_j^{T}}{\sum_{j=1}^nY_j^*(u)}.
\end{align*}

The resulting estimate $\hat{\theta}_{n}$ of the true parameter value
$\theta_0$ is obtained as a minimizer of
$\|\Psi_{n}\{\theta,\hat{\eta}_{n}(\cdot,\theta)\}\|$ which in turn
enables estimation of the cumulative hazard $A_0(\cdot)$ by
$\hat{A}_{n}(\hat{\theta}_{n},\cdot)$.

\subsection{Weight functions\label{sec:weight}}

From the above we see that the weight functions in the efficient score function obey the relationship $\rho_{\gamma}(u)=u\rho_{\beta}(u)+1$. Accordingly, we suggest using using weights of the form $\rho_{\beta}(u)=\rho(u)$ and $\rho_{\gamma}(u)=\rho(u)u+1$ in practice as this choice mimics the efficient structure. As for the specific choice of $\rho(u)$, throughout the literature various authors have found rank-based estimation procedures, and associated variance estimators, to be quite insensitive to the choice of weight function \citep{linetal:1998,chenjewell:2001,jinetal:2003} and, in particular, typically suggest the use of (i) the log-rank weight, $\rho(u) = 1$, which assigns equal weight to all observations and is efficient when $e \sim$ Extreme Value (i.e., $T \sim$ Weibull), or (ii) the Gehan weight, $\rho(u) = \sum_{j=1}^nY_j^*(u)/n$, which is somewhat more data-driven in that it assigns less weight to observations for which there is less information (i.e., those corresponding to survival times in the tail of the distribution).

Alternatively, a theoretically semiparametrically efficient procedure could be based on adaptively estimating $\rho(u)$ directly from the data, perhaps using kernel smoothing \citep{tsiatis:1990, laiandying:1991, zen08eff}. However, this step introduces additional complexity beyond the use of a deterministic weight function, which can introduce some instability into the numerical estimation procedure, and, moreover, one must then consider the selection of an optimal bandwidth -- for which there are no clear guidelines in this context, and to which the results (particularly the variance estimators) can be sensitive \citep{zen08eff}. Furthermore, the resulting efficiency gain is not large in practice (cf.~\citet{chenjewell:2001}, \citet{jinetal:2003}, and  \citet{zen08eff}). For these reasons we propose the use of rank-based procedures within our semiparametric multi-parameter regression setting, and investigate some choices of weight function in Section \ref{sec:sim} and the Online Supporting Information.

\section{Asymptotic properties\label{sec:asymptotic}}

\subsection{Key results}
We show that $\hat{\theta}_{n}$ is consistent, that
$n^{1/2}(\hat{\theta}_{n}-\theta_0)$ converges to a zero mean Gaussian
distribution, and that
$n^{1/2}\{\hat{A}_{n}(\hat{\theta}_{n},\cdot)-A_0(\cdot)\}$ converges
to a tight zero mean Gaussian process. Regularity conditions and
proofs, extending the arguements of \citet{nan09} to the multi-parameter regression setting, can be found in the Online Supporting Information.

First we turn to the consistency. For this purpose let $\Psi(\theta,\eta)$ denote the limit of $\Psi_{n}(\theta,\eta)$ and let $\eta_{0}(\cdot,\theta)$ denote the limit of $\hat{\eta}_{n}(\cdot,\theta)$. Then we have the following result.

\begin{theorem}\label{thm:1} Assume that $\theta_0\in\Theta$ is the
  unique solution of $\Psi\{\theta,\eta_0(\cdot,\theta)\}=0$.  Then an
  approximate root $\hat{\theta}_n$ satisfying
  $\Psi_n\{\hat{\theta}_n,\hat{\eta}_n(\cdot,\hat{\theta}_n)\}=o_{P^*}(1)$
  is consistent for $\theta_0$.
\end{theorem}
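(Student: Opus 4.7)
The plan is the standard consistency argument for a Z-estimator with an estimated infinite-dimensional nuisance parameter, in the spirit of \citet{nan09}. The core ingredient is the uniform (in $\theta\in\Theta$) convergence
\[
\sup_{\theta\in\Theta}\bigl\|\Psi_{n}\{\theta,\hat{\eta}_{n}(\cdot,\theta)\}-\Psi\{\theta,\eta_{0}(\cdot,\theta)\}\bigr\|\longrightarrow 0
\]
in outer probability. Combined with the hypothesis $\Psi_{n}\{\hat\theta_{n},\hat{\eta}_{n}(\cdot,\hat\theta_{n})\}=o_{P^{*}}(1)$, this forces $\Psi\{\hat\theta_{n},\eta_{0}(\cdot,\hat\theta_{n})\}\to 0$ in probability; uniqueness of $\theta_{0}$ combined with compactness of $\Theta$ and continuity of the limit map then upgrades to the well-separated property $\inf_{\|\theta-\theta_{0}\|\ge\varepsilon}\|\Psi\{\theta,\eta_{0}(\cdot,\theta)\}\|>0$ for every $\varepsilon>0$, from which $\hat\theta_{n}\to\theta_{0}$ in (outer) probability follows by the classical argmin theorem.

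To establish the uniform convergence I would split the difference by inserting $\Psi_{n}\{\theta,\eta_{0}(\cdot,\theta)\}$. The term $\Psi_{n}\{\theta,\eta_{0}(\cdot,\theta)\}-\Psi\{\theta,\eta_{0}(\cdot,\theta)\}$ is a uniform law of large numbers over the class of maps
\[
(X,Z,\tilde T,\Delta)\mapsto I(\varepsilon_{\theta}\le\tau)\rho(\varepsilon_{\theta})\bigl\{(\sigma^{-1}X^{T},Z^{T})-\eta_{0}(\varepsilon_{\theta},\theta)\bigr\}^{T}\Delta
\]
indexed by $\theta\in\Theta$. Because $\varepsilon_{\theta}=\exp(\gamma^{T}Z)(\log\tilde T+\beta^{T}X)$ is smooth in $\theta$ and $u\mapsto I(u\le\tau)\rho(u)$ is monotone/Lipschitz, standard preservation results place this class in the VC framework, hence Glivenko--Cantelli under moment conditions. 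The second piece, $\Psi_{n}\{\theta,\hat{\eta}_{n}(\cdot,\theta)\}-\Psi_{n}\{\theta,\eta_{0}(\cdot,\theta)\}$, is linear in the nuisance and is controlled, up to an $O_{P}(1)$ factor, by $\sup_{u\le\tau,\theta\in\Theta}\|\hat{\eta}_{n}(u,\theta)-\eta_{0}(u,\theta)\|$. For this supremum a Glivenko--Cantelli argument applied separately to the numerator and denominator of $\hat\eta_{n}^{\beta}$ and $\hat\eta_{n}^{\gamma}$, together with an assumption that the denominator is bounded below on $(-\infty,\tau]$, delivers uniform convergence to zero.

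The hard part will be verifying the Glivenko--Cantelli property for these $(u,\theta)$-indexed classes. In the pure accelerated failure time case treated by \citet{nan09}, the time-transformation $g_{i}(u)=\sigma_{i}u+\mu_{i}$ depends on $\theta$ only through the location shift $\mu_{i}=-\beta^{T}X_{i}$; in the multi-parameter setting $\theta$ additionally enters via the scale $\sigma_{i}=\exp(-\gamma^{T}Z_{i})$, so the indicator $I(\sigma_{i}^{-1}(\log\tilde T_{i}-\mu_{i})\le u)$ defines a richer half-space class and must be multiplied against the $X_{i}$, $Z_{i}$ and $\rho$-weighted factors. Showing that these products remain VC-subgraph (via the usual preservation under products, sums, and Lipschitz composition) and carry an integrable envelope is the main technical hurdle, and is precisely the point at which the arguments of \citet{nan09} need to be extended before the Z-estimator template can be applied.
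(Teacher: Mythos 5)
Your proposal is correct in substance and follows the same Z-estimator template as the paper: uniform convergence of $\Psi_n\{\theta,\hat\eta_n(\cdot,\theta)\}$ to $\Psi\{\theta,\eta_0(\cdot,\theta)\}$ over the compact $\Theta$, uniform consistency of $\hat\eta_n$ obtained from Glivenko--Cantelli control of the numerator and denominator classes together with the denominator being bounded away from zero (the paper's Assumption on $pr(\varepsilon_\theta\geq\tau)\geq\epsilon$), and then identifiability plus compactness (well-separation) to conclude. Two differences are worth noting. First, your decomposition inserts $\Psi_n\{\theta,\eta_0(\cdot,\theta)\}$ and controls $\Psi_n\{\theta,\hat\eta_n\}-\Psi_n\{\theta,\eta_0\}$ directly by $\sup_{u,\theta}\|\hat\eta_n-\eta_0\|$ times an empirical average of a bounded factor; the paper instead bounds $(\Pn-P)\psi(O;\theta,\hat\eta_n)$, a random class containing $\hat\eta_n$, which it handles by showing $\{\hat\eta_n(\varepsilon_\theta,\theta)\}$ lies in the closed convex hull of a Donsker class (van der Vaart and Wellner, Theorem 2.10.3, Example 2.10.9). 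Your route is marginally more elementary for consistency, though the convex-hull fact is needed by the paper later anyway. Second, the entropy control: the paper establishes polynomial bracketing numbers for the classes $\mathcal{F}_0$ and $\mathcal{F}_1$ (its Lemma A.1, extending Nan et al.), exploiting the Lipschitz dependence of $\varepsilon_\theta$ on $\theta$ and the bounded densities of the failure and censoring times; you propose VC-subgraph preservation instead. Be careful there: VC-subgraph classes are not in general closed under pointwise products, so the assertion that the products ``remain VC-subgraph via the usual preservation'' is not accurate as stated---for the Glivenko--Cantelli property you actually need, preservation of uniformly bounded GC classes under products (or the paper's bracketing route) is the correct tool; likewise $u\mapsto\rho(u)$ need not be monotone or Lipschitz for general weights, which is why the paper simply assumes $\{\rho(\varepsilon_\theta):\theta\in\Theta\}$ is a bounded Donsker class. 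With those repairs (which you partly anticipate by flagging the product classes as the main hurdle), your argument goes through and matches the paper's conclusion.
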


Next, for detailing the weak convergence of $n^{1/2}(\hat{\theta}_{n}-\theta_0)$, we adopt the following notation. Let $O_{i}=(\log\tilde{T}_{i},\Delta_{i},X_i,Z_i)$ denote what we observe on the $i$th individual, and
\begin{align*}
\varepsilon_{\theta}(O)=\exp(\gamma^{T}Z)(\log \tilde{T}+\beta^{T}X)
\end{align*}
so that $\varepsilon_{\theta}(O_{i})=\varepsilon_{\theta i}$. In line with this, we shall use the short notation $\varepsilon_{\theta}$ for $\varepsilon_{\theta}(O)$ and also denote $\varepsilon_{\theta_0}$ by $\varepsilon_{0}$. Moreover we define
$$
\psi(O;\theta,\eta)=I(\epsilon_{\theta}\leq\tau)\rho(\varepsilon_{\theta})[\{\exp(\gamma^{T}Z)X^{T},Z^{T}\}-\eta(\varepsilon_{\theta},\theta)]^{T}\Delta
$$
so that
$\Psi_{n}(\theta,\eta)=\frac{1}{n}\sum_{i=1}^{n}\psi(O_i;\theta,\eta)$, and
\begin{align*}
  J(O;\theta,\eta,A) &=\psi\{O;\theta,\eta(\cdot,\theta)\} \nonumber\\
                     &\phantom{=}\;-\int_{-\infty}^{\tau} \rho(t)I(\varepsilon_{\theta}\geq t)[\{\exp(\gamma^{T}Z)X^{T},Z^{T}\}-\eta(t,\theta)]^{T}dA(t).
\end{align*}
We also define the $(p+q)\times (p+q)$ matrices  $\dot{\eta}_{\theta}(\varepsilon_{\theta},\theta)=D_{\theta} \eta(\varepsilon_{\theta},\theta)$ and $\dot{\Psi}_{\theta}\{\theta_0,\eta_0(\cdot,\theta_0)\} = D_\theta \Psi\{\theta_0,\eta_0(\cdot,\theta_0)\}$ where the $\theta$ subscript in $\dot{\eta}_{\theta}$ and $\dot{\Psi}_{\theta}$ serves as a reminder that these derivatives are taken with respect to $\theta$.

Finally, we adopt the usual empirical process notation
\begin{align*}
&Pf=\int f(o)dP(o),\\
&\Pn f=\frac{1}{n}\sum_{i=1}^{n}f(O_{i}),\\
&\Gn f=n^{1/2}(\Pn f-Pf),
\end{align*}
where $f$ denotes some bounded function on the sample space.
\begin{theorem}\label{thm:2}
  Let $\hat{\theta}_n$ be an approximate root satisfying
  $\Psi_n\{\hat{\theta}_n,\hat{\eta}(\cdot,\hat{\theta}_n)\}=o_{P^*}(n^{-1/2})$. Suppose
  that $\theta\mapsto\eta_0(\varepsilon_{\theta},\theta)$ is differentiable with
  uniformly bounded and continuous derivative
  $\dot{\eta}_{\theta}(\varepsilon_{\theta},\theta)$. Then if
  $\dot{\Psi}_{\theta}\{\theta_0,\eta_0(\cdot,\theta_0)\}$ is non-singular,
\begin{align*}
n^{1/2}(\hat{\theta}_{n}-\theta_0)=-\dot{\Psi}_{\theta}^{-1}\{\theta_0,\eta_0(\cdot,\theta_0)\} \times\Gn J(\theta_0,\eta_0,A_0) +o_{P^*}(1).
\end{align*}

\end{theorem}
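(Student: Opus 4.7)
The proof is a standard Z-estimator linearisation for a semiparametric model with an infinite-dimensional nuisance $\eta$, in the spirit of Nan and Wellner (2009). The idea is to decompose the defining relation $\Psi_n\{\hat\theta_n,\hat\eta_n(\cdot,\hat\theta_n)\}=o_{P^*}(n^{-1/2})$ into a deterministic Taylor term in $\theta$ and a stochastic empirical-process term centred at $\theta_0$, then invert $\dot{\Psi}_\theta$. Specifically, I would write
$\Psi_n\{\hat\theta_n,\hat\eta_n(\cdot,\hat\theta_n)\} = R_n(\hat\theta_n) + \Psi\{\hat\theta_n,\eta_0(\cdot,\hat\theta_n)\}$ with $R_n(\theta) = \Psi_n\{\theta,\hat\eta_n(\cdot,\theta)\} - \Psi\{\theta,\eta_0(\cdot,\theta)\}$. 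Under the stated differentiability of $\theta\mapsto \eta_0(\varepsilon_{\theta},\theta)$ and non-singularity of $\dot{\Psi}_\theta\{\theta_0,\eta_0(\cdot,\theta_0)\}$, the deterministic bracket admits the expansion $\dot{\Psi}_\theta\{\theta_0,\eta_0(\cdot,\theta_0)\}(\hat\theta_n-\theta_0) + o(|\hat\theta_n-\theta_0|)$ after using $\Psi\{\theta_0,\eta_0(\cdot,\theta_0)\}=0$.

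The heart of the argument is to show $R_n(\hat\theta_n) = n^{-1/2}\Gn J(\theta_0,\eta_0,A_0) + o_{P^*}(n^{-1/2}+|\hat\theta_n-\theta_0|)$. I would proceed in two sub-steps. First, a stochastic-equicontinuity (Donsker) argument applied to the class $\{\psi(\cdot;\theta,\eta):\theta\in\Theta,\,\eta\in\mathcal{H}\}$, together with consistency of $\hat\theta_n$ (from Theorem \ref{thm:1}) and uniform $n^{-1/2}$-rate consistency of $\hat\eta_n(\cdot,\theta)$ in a neighbourhood of $\theta_0$, lets me freeze $\theta$ at $\theta_0$ in $R_n$, leaving $\Psi_n\{\theta_0,\hat\eta_n(\cdot,\theta_0)\} - \Psi\{\theta_0,\eta_0(\cdot,\theta_0)\}$ up to a negligible remainder. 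Second, I would split this quantity further as $[\Psi_n\{\theta_0,\eta_0\}-\Psi\{\theta_0,\eta_0\}] + [\Psi_n\{\theta_0,\hat\eta_n\} - \Psi_n\{\theta_0,\eta_0\}]$. The first piece is $n^{-1/2}\Gn\psi(\cdot;\theta_0,\eta_0(\cdot,\theta_0))$; the second, using the Nelson--Aalen identity $\sum_i Y_i^*(u)\,d\hat{A}_n(u,\theta_0) = \sum_i dN_i^*(u)$ that underpins the appearance of $\hat\eta_n$ in $\Psi_n$, can be rewritten as $-\Pn\int_{-\infty}^\tau \rho(t) I(\varepsilon_0\geq t)[\{\exp(\gamma_0^TZ)X^T,Z^T\}-\eta_0(t,\theta_0)]^T\, dA_0(t) + o_{P^*}(n^{-1/2})$, i.e.\ exactly the integral term subtracted from $\psi$ in the definition of $J$. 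Since the counting-process martingale property ensures $\mathbb{P}J(\cdot;\theta_0,\eta_0,A_0)=0$, we have $\Pn J = n^{-1/2}\Gn J$, which yields the claimed form for $R_n$.

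Combining the Taylor expansion with this stochastic identification gives
$\dot{\Psi}_\theta(\hat\theta_n-\theta_0) = -n^{-1/2}\Gn J(\theta_0,\eta_0,A_0) + o_{P^*}(n^{-1/2}+|\hat\theta_n-\theta_0|)$. A standard self-bounding argument (taking norms, invoking non-singularity of $\dot{\Psi}_\theta$, and absorbing the $o_{P^*}(|\hat\theta_n-\theta_0|)$ term) yields $\hat\theta_n-\theta_0 = O_{P^*}(n^{-1/2})$, so the remainder is in fact $o_{P^*}(n^{-1/2})$, and inversion of $\dot{\Psi}_\theta$ produces the stated representation. The main obstacle, I anticipate, will be verifying the Donsker condition for the relevant function class (which must accommodate the scale factor $\exp(\gamma^T Z)$ entering nonlinearly in both $\varepsilon_\theta$ and the outer multiplier) and, more delicately, tracking the algebraic consequences of the Nelson--Aalen centering so that the nuisance-correction term lines up precisely with the integral in $J$ rather than only with some asymptotically equivalent expression; this is the step that extends the Nan--Wellner argument from single- to multi-parameter regression.
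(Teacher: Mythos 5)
Your proposal is correct and follows essentially the same route as the paper's proof: a Nan--Wellner-style Z-estimator linearisation in which the drift caused by $\hat\eta_n-\eta_0$ is identified, via the hazard identity, with the $dA_0$-integral in $J$ (which has mean zero), after which $\dot{\Psi}_{\theta}\{\theta_0,\eta_0(\cdot,\theta_0)\}$ is inverted. The differences are organisational rather than substantive: the paper first establishes the $n^{1/2}$-rate of $\hat\theta_n$ and $\hat\eta_n$ in a separate lemma and then proves asymptotic linearity of $\theta\mapsto\Psi_n\{\theta,\hat\eta_n(\cdot,\theta)\}$ on $Kn^{-1/2}$-balls by telescoping through the $\hat\eta_n$, $\rho$, indicator and $e^{\gamma^TZ}$ arguments, whereas you fold the rate into a self-bounding step and reach the same drift identification through the sample-level Nelson--Aalen centering instead of the population relation $dP_{\varepsilon_0,\Delta}(t,1)=d^{(0)}(t,\theta_0)\,dA_0(t)$.
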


Now we turn to the weak convergence of of $n^{1/2}\{\hat{A}_{n}(\hat{\theta}_{n},\cdot)-A_0(\cdot)\}$. For this we define
$$
\phi(t,\theta)=P\left\{I(\varepsilon_{\theta}\leq t)\Delta d^{(0)}(\varepsilon_{\theta},\theta)^{-1}\right\},
$$
where $d^{(0)}(t,\theta)=PI(\varepsilon_{\theta}\geq t)$, and, furthermore, the associated vector of derivatives $\dot{\phi}_{\theta} = D_\theta \phi$. We further define
\begin{align*}
H(O;t,\theta,D^{(0)},A) = I(\varepsilon_{\theta}\leq t)\Delta D^{(0)}(\varepsilon_{\theta},\theta)^{-1} -\int_{-\infty}^{t}I(\varepsilon_{\theta}\geq s)D^{(0)}(s,\theta)^{-1}dA(s).
\end{align*}
With this notation we have the following result
\begin{theorem}\label{thm:3}
  Let $\hat{\theta}_n$ be an estimator of $\theta_{0}$ such that
  $n^{1/2}(\hat{\theta}_n-\theta_{0})$ converges weakly to a zero mean
  normal distribution. Then
  $n^{1/2}\{\hat{A}_{n}(\hat{\theta}_{n},\cdot)-A_0(\cdot)\}$ converges
  weakly to a tight zero mean Gaussian process on $]-\infty,\tau]$ and
  the following holds
\begin{align*}
n^{1/2}\{\hat{A}_{n}(\hat{\theta}_{n},t)-A(t)\}=\dot{\phi}_{\theta}(t,\theta_{0})n^{1/2}(\hat{\theta}_{n}-\theta_{0}) +\Gn H(t,\theta_0,d^{(0)},A_0) +o_{P^*}(1).
\end{align*}

\end{theorem}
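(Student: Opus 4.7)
The plan is to establish the stated asymptotic expansion first and then deduce weak convergence from it. I begin with the decomposition
\begin{align*}
\hat{A}_n(\hat{\theta}_n,t) - A_0(t) = \{\hat{A}_n(\hat{\theta}_n,t) - \phi(t,\hat{\theta}_n)\} + \{\phi(t,\hat{\theta}_n) - \phi(t,\theta_0)\},
\end{align*}
which uses $\phi(t,\theta_0)=A_0(t)$, a consequence of the hazard identity $dm(\cdot,\theta_0) = d^{(0)}(\cdot,\theta_0)\,dA_0$ for $m(s,\theta)=P\{\Delta I(\varepsilon_\theta\leq s)\}$. The second bracket is a deterministic functional of $\hat\theta_n$, so a first-order Taylor expansion combined with the rate $\hat\theta_n-\theta_0=O_{P^*}(n^{-1/2})$ (available from Theorem \ref{thm:2} under its hypotheses) yields
\begin{align*}
\phi(t,\hat\theta_n)-\phi(t,\theta_0) = \dot\phi_\theta(t,\theta_0)(\hat\theta_n-\theta_0)+o_{P^*}(n^{-1/2}),
\end{align*}
producing the $\dot\phi_\theta(t,\theta_0)n^{1/2}(\hat\theta_n-\theta_0)$ contribution in the expansion.

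For the first bracket I would linearize the plug-in estimator via the algebraic identity $1/D_n^{(0)}=1/d^{(0)}-(D_n^{(0)}-d^{(0)})/(d^{(0)} D_n^{(0)})$, where $D_n^{(0)}(s,\theta)=\Pn I(\varepsilon_\theta\geq s)$, to obtain
\begin{align*}
\hat{A}_n(t,\theta)-\phi(t,\theta) &= \int_{-\infty}^{t}\frac{d\{\bar{N}^*(s,\theta)-m(s,\theta)\}}{d^{(0)}(s,\theta)} \\
&\quad - \int_{-\infty}^{t}\frac{D_n^{(0)}(s,\theta)-d^{(0)}(s,\theta)}{d^{(0)}(s,\theta)}\,dA(s) + R_n(t,\theta),
\end{align*}
where $\bar{N}^*(s,\theta)=\Pn\{\Delta I(\varepsilon_\theta\leq s)\}$. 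Evaluated at $\theta=\theta_0$ and $A=A_0$, the two explicit integrals add up exactly to $\Pn H(\cdot;t,\theta_0,d^{(0)},A_0)$, again using the hazard identity. The remainder $R_n(t,\theta)$ is bilinear in $D_n^{(0)}-d^{(0)}$ and $d\bar{N}^*-dm$; the uniform rate $\sup_{s\leq\tau,\theta\in U}|D_n^{(0)}(s,\theta)-d^{(0)}(s,\theta)|=O_{P^*}(n^{-1/2})$ over a neighbourhood $U$ of $\theta_0$, obtained from a Donsker property of the relevant indicator class, then delivers $R_n(t,\hat\theta_n)=o_{P^*}(n^{-1/2})$ uniformly in $t\leq\tau$.

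To transfer from $\hat\theta_n$ to $\theta_0$ in the leading empirical-process term, I would invoke stochastic equicontinuity: if $\{H(\cdot;t,\theta,d^{(0)},A_0):t\leq\tau,\theta\in U\}$ is $P$-Donsker and $\theta\mapsto H(\cdot;t,\theta,d^{(0)},A_0)$ is $L_2(P)$-continuous uniformly in $t$, then combined with the consistency $\hat\theta_n\to_{P^*}\theta_0$ from Theorem \ref{thm:1} one obtains $\Gn H(t,\hat\theta_n,d^{(0)},A_0)=\Gn H(t,\theta_0,d^{(0)},A_0)+o_{P^*}(1)$ uniformly in $t\leq\tau$. Piecing the three ingredients together produces the stated expansion. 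Weak convergence of $n^{1/2}\{\hat{A}_n(\hat\theta_n,\cdot)-A_0(\cdot)\}$ to a tight zero-mean Gaussian process on $]-\infty,\tau]$ then follows by combining the Donsker convergence of $\Gn H(\cdot,\theta_0,d^{(0)},A_0)$, continuity of $t\mapsto\dot\phi_\theta(t,\theta_0)$, and the weak limit of $n^{1/2}(\hat\theta_n-\theta_0)$ supplied by Theorem \ref{thm:2}; joint Gaussianity is inherited from the influence-function representation in that theorem.

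The main obstacle is the uniform-in-$t$ linearization of the plug-in estimator and the control of $R_n(t,\hat\theta_n)$: one must verify that the indicator classes $\{I(\varepsilon_\theta\leq t):t\leq\tau,\theta\in U\}$ and $\{I(\varepsilon_\theta\geq t):t\leq\tau,\theta\in U\}$, as well as the integrated variants entering $H$, are $P$-Donsker with integrable envelopes, and that $d^{(0)}(s,\theta)$ stays bounded away from zero on the relevant range. This is the technical core the authors indicate is carried out in the Online Supporting Information by adapting \citet{nan09} to the multi-parameter setting.
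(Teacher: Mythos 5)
Your proposal is correct in substance, but it takes a genuinely different decomposition from the paper's. You center at the population functional evaluated at the estimated parameter, writing $\hat{A}_n(\hat\theta_n,t)-A_0(t)=\{\hat{A}_n(\hat\theta_n,t)-\phi(t,\hat\theta_n)\}+\{\phi(t,\hat\theta_n)-\phi(t,\theta_0)\}$, so the $\dot\phi_{\theta}(t,\theta_0)$ term falls out of a purely deterministic Taylor expansion, while the stochastic work is pushed into a uniform-in-$\theta$ linearization of $\hat{A}_n(\cdot,\theta)-\phi(\cdot,\theta)$ followed by a stochastic-equicontinuity step to replace $\hat\theta_n$ by $\theta_0$ inside $\Gn H$. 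The paper instead centers at the estimator evaluated at the true parameter, $\{\hat{A}_n(t,\hat\theta_n)-\hat{A}_n(t,\theta_0)\}+\{\hat{A}_n(t,\theta_0)-A_0(t)\}$: the first (empirical) difference is expanded with the same machinery as in the proof of Theorem \ref{thm:2} (splitting the change in $1/D_n^{(0)}$ and the change in the indicator $I(\varepsilon_{\theta}\leq t)$, using the bounded-density assumptions to identify $\dot\phi_{\theta}(t,\theta_0)$), while the second difference is linearized at the fixed value $\theta_0$, directly yielding $\Gn H(t,\theta_0,d^{(0)},A_0)$. Both routes rest on the same ingredients (the bracketing/Donsker results of Lemma \ref{lemma:1}, $d^{(0)}$ bounded away from zero on $]-\infty,\tau]$, smoothness of the population means), so neither is cheaper overall; your version buys a cleaner appearance of $\dot\phi_\theta$ at the cost of a uniform linearization, the paper's version recycles the Theorem \ref{thm:2} computations and only ever linearizes $1/D_n^{(0)}$ at $\theta_0$. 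Two small points you should tighten if you carry your route out: (i) the leading empirical term produced by your linearization at $\theta=\hat\theta_n$ is $\Gn H$ with the $\theta$-dependent compensator $dm(\cdot,\theta)/d^{(0)}(\cdot,\theta)$ in place of $dA_0$, so the Donsker class and the $L_2(P)$-continuity in $\theta$ must be verified for that family, not for $H(\cdot;t,\theta,d^{(0)},A_0)$ as written; and (ii) the bilinear part of $R_n$, namely $\int(D_n^{(0)}-d^{(0)})(d^{(0)})^{-2}\,d(\bar N^*-m)$, is not $o_{P^*}(n^{-1/2})$ merely because $\|D_n^{(0)}-d^{(0)}\|_\infty=O_{P^*}(n^{-1/2})$ (the total variation of $\bar N^*-m$ is only $O_P(1)$); it needs the same Donsker-plus-vanishing-$L_2$-norm argument the paper uses for the first term of its display (\ref{eq:16}). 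Both fixes use tools you already invoke, so the gap is one of detail rather than of approach. Finally, note that tightness/joint Gaussianity of the limit requires the two pieces to converge jointly; as you observe, this is supplied by the asymptotically linear representation of Theorem \ref{thm:2} rather than by the marginal normality hypothesis alone.
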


\subsection{A resampling procedure for estimating  asymptotic variances\label{lsvar}}

We note that the limiting covariance matrix for $n^{1/2}(\hat{\theta}_{n}-\theta_0)$ can be estimated by
\begin{align*}
\frac{1}{n}\sum_{i=1}^n (\hat{\dot{\Psi}}_{\theta}^{-1}  \hat J_i)^{\bigotimes2}
\end{align*}
where $a^{\bigotimes2} = a a^T$, and $\hat{\dot{\Psi}}_{\theta}$ and
$\hat J_i= J(O_i;\hat\theta_n,\hat\eta_n,\hat A_n)$ are estimates of their
theoretical counterparts. While $\hat J_i$ is easily computed,
estimation of $\dot{\Psi}_{\theta}$ would require an estimate of the
error hazard $\alpha(\cdot)$ which is difficult to obtain
reliably. The classical solution to this problem is to produce a
sample $\hat{\theta}^{b}$, $b=1,\ldots m$, by solving perturbed
estimating equations (often based on \citet{parzenetal:1994}), from
which the limiting covariance matrix can be estimated directly; such
procedures are, however, computationally intensive owing to solving
estimating equations multiple times.

As the representation in Theorem \ref{thm:2} is of the form considered
in \citet{zeng08}, we may apply their more modern resampling approach
which requires re-evaluating (but not re-solving) estimating
equations. Their approach is based on the fact that
$n^{-1/2} \Psi_n(\hat\theta_n+n^{-1/2}G) = \dot \Psi^{-1} G +
o_{P^*}(1)$, where $G$ is a zero-mean Gaussian $(p+q)$-vector
independent of the data, which motivates the least squares estimate
$\hat{\dot{\Psi}}_{\theta} = (M^TM)^{-1} M^T U$ where $M$ and $U$ are
matrices whose $b$th rows are, respectively, given by the zero-mean
Gaussian vector $G^{b}$, and the vector
$n^{-1/2}\Psi_n(\hat\theta_n+n^{-1/2}G^b)$, $b=1,\ldots,m$.

In a similar manner to the regression coefficients, we can estimate
the limiting variance of
$n^{1/2}\{\hat{A}_{n}(\hat{\theta}_{n},t)-A_0(t)\}$ using
\begin{align*}
\frac{1}{n}\sum_{i=1}^n \{- \hat{\dot{\phi}}_{\theta}(t)\hat{\dot{\Psi}}_{\theta}^{-1} \hat J_i + \hat H_i(t)\}^2
\end{align*}
where $\hat{\dot{\Psi}}_{\theta}$ is estimated using least squares as
described in the previous paragraph, and
$\hat{H}_i(t) = H(O_i; t,\hat\theta_n,D_n^{(0)},\hat A_n)$ where
$D_n^{(0)}(t,\theta)=\Pn I(\varepsilon_{\theta}\geq t) = \frac{1}{n}
\sum_{i=1}^n Y_i^*(t)$. However, an estimator
$\hat{\dot{\phi}}_{\theta}(t)$ of $\dot{\phi}_{\theta}(\theta_{0},t)$
is difficult to obtain directly. Instead we adapt the resampling idea
of \citet{zeng08} to obtain an estimator of
$\dot{\phi}_{\theta}(\theta_{0},t)$.  In particular, according to the
asymptotic representation of
$n^{1/2}\{\hat{A}_{n}(\hat{\theta}_{n},\cdot)-A_0(\cdot)\}$ in Theorem
\ref{thm:3}, we find that
\begin{align*}
&n^{1/2}\{\hat{A}_{n}(\hat\theta_n+n^{-1/2}G,t) - \hat{A}_{n}(\hat{\theta}_{n},t)\} \\
\qquad&=
n^{1/2}\{\hat{A}_{n}(\hat\theta_n+n^{-1/2}G,t)-A_0(t)\} - n^{1/2}\{\hat{A}_{n}(\hat{\theta}_{n},t)-A_0(t)\} \\
\qquad&=\dot{\phi}_{\theta}(t,\theta_{0}) G +o_{P^*}(1)
\end{align*}
which motivates the least squares estimate
$\hat{\dot{\phi}}_{\theta}(t) = \{(M^T M)^{-1} M^T \tilde U\}^T$ where
$M$ is as before, and $\tilde U$ is a matrix whose $b$th row is given
by
$n^{1/2}\{\hat{A}_{n}(\hat\theta_n+n^{-1/2}G^b,t) -
\hat{A}_{n}(\hat{\theta}_{n},t)\}$, $b=1,\ldots,m$.

With the estimates of the limiting variances of $\hat\theta_n$ and
$\hat{A}_{n}(\hat{\theta}_{n},t)$, we can straightforwardly produce
Wald-type confidence intervals for the parameters, and confidence bands
for the error cumulative hazard. In the latter case, as is standard,
it is preferable to produce confidence bands on the $\log A_0(t)$ scale
first and back-transform to the $A_0(t)$ scale. For functionals of
$\theta_0$ and $A_0(t)$, one could apply the functional delta method
\citep{andersen93}. However, in line with the resampling approaches
discussed above, we suggest the use of the conditional multiplier
method from empirical process theory \citep{vdvaartwellner96} which is
described in the Online Supporting Information.

\section{Numerical studies}

\subsection{Simulation\label{sec:sim}}

We now investigate the performance of our procedure in finite samples by way of a simulation study. In particular, we generated survival times according to the following setup: $\mu = -(X_1 \beta_1 + X_2 \beta_2)$ and $\log \sigma = - X_1 \gamma_1$ with covariates $X_1 \sim \text{Bernoulli}(0.5)$ and $X_2 \sim \text{Uniform}(0,1)$,  parameter vector $\theta = (\beta_1,\beta_2,\gamma_1) = (1,1,1)$, and error distribution $e \sim N(0,1)$. Furthermore, we considered a sample size of 100 where survival times were randomly censored according to a log-normal distribution with unit-scale and location set to achieve censored proportions of approximately 20\% or 50\% respectively.

Since the estimation equations are non-smooth step-functions of the
parameters, we applied the Nelder-Mead optimisation procedure as
implemented in the \texttt{optim} function in the \texttt{R}
programming language. It is also worth highlighting the fact that the
estimation equations, as we have presented them, depend on a
threshold, $\tau$, which was required for our asymptotic derivations
to ensure a non-empty risk set so that denominators are theoretically
bounded away from zero. To investigate the sensitivity of our approach
to the inclusion of $\tau$, we consider thresholds of 2 and
$\infty$. As for the choice of weight function we consider the log-rank weight, $\rho(u) = 1$, and the normal weight, $\rho(u) = f_e(u)/(1-F_e(u)) - u$ where $f_e$ and $F_e$ are the normal pdf and cdf functions, which is the true, efficient weight in our simulation study.

In total we present eight simulation scenarios comprising one sample
size, two censored proportions, two threshold values, and two choices of weight function. Each of
these scenarios was replicated 5000 times. Within each replicate we
estimated the following quantities: (i) the parameter vector,
$\theta$, (ii) the cumulative error hazard at the median error,
$A(0) =$ 0.6931, (iii) the conditional survivor function for
covariate profile $x^{(1)}= (x^{(1)}_1, x^{(1)}_2)^T = (1,1)^T$
evaluated at the median time for this covariate profile,
$S(t^{(1)}_{0.5}\,|\, x^{(1)})=$0.5, and (iv) the ratio of
the median for $x^{(1)}=(1,1)^T$ to the median for
$x^{(2)} = (0,1)^T$, $r(x^{(1)},x^{(2)})=$ 0.3679. For quantities
(i) and (ii), Wald-type confidence intervals were produced where the
limiting variances were estimated using least squares (as described in
Section \ref{lsvar}) with $m=1000$, and for quantites (iii) and (iv)
the conditional multiplier method was used (as described in the
Online Supporting Information) with $m=1000$.

\begin{table}
\def~{\hphantom{0}}
\caption{Results of simulation study}
\begin{tabular}{cccrrrrc@{\quad}rrrr}
\hline
 &&               & \multicolumn{4}{c}{Log-rank} &&  \multicolumn{4}{c}{Normal (true, efficient)} \\
%       &&               & \multicolumn{4}{c}{$n=100$} & \multicolumn{2}{c}{$n=500$} && \multicolumn{2}{c}{$n=100$} & \multicolumn{2}{c}{$n=500$} \\
\cline{4-7}\cline{9-12}
 $\tau$ & Cens. &  Parameter  & Bias & SE & SEE & Cov. && Bias &  SE & SEE & Cov. \\
 \hline
 $2$ & 20\%
       & $\beta_1$     &   $<$0.001 & 0.099 & 0.099 & 94.8  && -0.001 & 0.097 & 0.096 & 94.4    \\
      && $\beta_2$     &   0.008    & 0.189 & 0.185 & 94.0  &&  0.002 & 0.178 & 0.169 & 93.1    \\
      && $\gamma_1$    &  -0.001    & 0.174 & 0.179 & 94.1  &&  0.004 & 0.176 & 0.157 & 91.8    \\
      && $A$           &  -0.009    & 0.168 & 0.156 & 94.5  && -0.002 & 0.164 & 0.153 & 94.4    \\
      && $S$           &   $<$0.001 & ---         & ---         & 94.6  &&$<$0.001& ---         & ---         & 94.3    \\
      && $r$           &   0.003    & ---         & ---         & 95.1  &&  0.004 & ---         & ---         & 94.8    \\
$2$ & 50\%
       & $\beta_1$     &   $<$0.001 & 0.127 & 0.124 & 94.2  && -0.006 & 0.120 & 0.121 & 94.8    \\
      && $\beta_2$     &   0.005    & 0.216 & 0.210 & 93.3  &&  0.003 & 0.201 & 0.193 & 93.7    \\
      && $\gamma_1$    &  -0.015    & 0.208 & 0.230 & 94.6  &&  0.004 & 0.215 & 0.206 & 93.2    \\
      && $A$           &  -0.010    & 0.219 & 0.203 & 95.3  && -0.003 & 0.217 & 0.201 & 94.8    \\
      && $S$           &   0.008    & ---         & ---         & 95.3  &&  0.005 & ---         & ---         & 95.5    \\
      && $r$           &   0.006    & ---         & ---         & 95.5  &&  0.008 & ---         & ---         & 95.9    \\
$\infty$ & 20\%
       & $\beta_1$     &   $<$0.001 & 0.100 & 0.099 & 94.3  && -0.001 & 0.098 & 0.096 & 94.3    \\
      && $\beta_2$     &   0.003    & 0.189 & 0.183 & 93.6  && -0.002 & 0.179 & 0.169 & 93.4    \\
      && $\gamma_1$    &  -0.007    & 0.175 & 0.177 & 93.4  &&  0.003 & 0.171 & 0.156 & 93.2    \\
      && $A$           &  -0.004    & 0.165 & 0.156 & 94.3  && -0.005 & 0.165 & 0.152 & 94.2    \\
      && $S$           &   0.001    & ---         & ---         & 94.5  &&  0.004 & ---         & ---         & 94.4    \\
      && $r$           &   0.004    & ---         & ---         & 95.3  &&  0.005 & ---         & ---         & 94.6    \\
$\infty$ & 50\%
       & $\beta_1$     &   0.003    & 0.126 & 0.124 & 94.5  &&$<$0.001& 0.124 & 0.121 & 94.1    \\
      && $\beta_2$     &   0.001    & 0.215 & 0.208 & 93.1  && -0.001 & 0.207 & 0.191 & 92.6    \\
      && $\gamma_1$    &  -0.016    & 0.215 & 0.230 & 93.7  &&  0.008 & 0.224 & 0.206 & 92.0     \\
      && $A$           &  -0.010    & 0.230 & 0.204 & 95.1  && -0.008 & 0.223 & 0.201 & 94.5    \\
      && $S$           &   0.009    & ---         & ---         & 95.3  &&  0.005 & ---         & ---         & 94.4    \\
      && $r$           &   0.005    & ---         & ---         & 95.4  &&  0.007 & ---         & ---         & 95.3\\
\hline
\end{tabular}
\label{tab:sim}

{\footnotesize
Cens., censored proportion; Bias, median bias; SE, standard error of estimates; SEE, median of estimated standard error, Cov., empirical coverage percentage for 95\% confidence interval; $A = A(0)$; $S=S(t^{(1)}_{0.5}\,|\, x^{(1)})$; $r = r(x^{(1)},x^{(2)})$. Since the variances for the functionals $S$ and $r$ are not estimated directly within our scheme, SE and SEE are not shown in those cases.}
\end{table}

The results are summarised in Table \ref{tab:sim}. It is clear that the estimates are reasonably unbiased in all cases and the associated 95\% confidence intervals achieve a coverage percentage which is close to the desired nominal level for both choices of weight function, and, in either case, the results for $\tau=2$ and $\tau=\infty$ are very similar. The estimated standard errors capture the true variations adequately, and, moreover, the efficiency based on the log-rank weights is very close to that of the true, efficient weights (which is in line with the findings of other authors in the simpler accelerated failure time model context). Additional simulation results are given in the Online Supporting Information which cover $n=50$ and $n=500$, and Gehan weights; the results are comparable with those shown here.

\subsection{Lung cancer data}

We now apply our model to data arising from a lung cancer study which was the subject of a 1995 Queen's University Belfast PhD thesis by P. Wilkinson (previously analysed in \citet{burmac:2017}). This observational study pertains to 855 individuals who were diagnosed with lung cancer during the one-year period 1st October 1991 to 30th September 1992, and these individuals were followed up until 30th May 1993 (approximately 20\% of survival times were right-censored). The primary interest was to investigate the differences between the following treatment groups: palliative care, surgery, chemotherapy, radiotherapy, and a combined treatment of chemotherapy and radiotherapy. While various other covariates were measured (see \citet{burmac:2017}), the aim here is to illustrate our semiparametric multi-parameter regression methodology for the treatment model.

The results of the fitted model are given in Table \ref{tab:lung} where the log-rank weights were used in the estimation procedure (see the Online Supporting Information for other choices of weights which yield numerically very similar results). Firstly note that the $\beta$ coefficients are all negative, and statistically significant, suggesting an improvement in survival relative to palliative care group. The $\gamma$ coefficients of radiotherapy and the combined treatment of both chemotherapy and radiotherapy differ statistically from zero, indicating that the quantile ratios are non-constant. Furthermore, note that the $\gamma$ coefficients are positive and thus the quantile ratios decrease over the timeframe, i.e., the effectiveness of these treatments diminishes over time. Recall that, this being an observational study, the estimated effects are not ``treatment effects'' in the sense of a randomised trial, but, notwithstanding this, analyses of observational effects are still useful in their own right.

\begin{table}
\def~{\hphantom{0}}
\caption{Regression coefficients for model fitted to lung cancer data}
\begin{tabular}{lcrrrrrrr}
\hline
 && \multicolumn{3}{c}{Location} & \multicolumn{3}{c}{Scale} & Joint \\
\hline
Treatment Group &Sample size &  Est. & SE   & P-val    & Est. & SE   & P-val   & P-val \\
 Palliative care & 441& 0.00 & ---  & ---     & 0.00 & ---  & ---  & --- \\
Surgery         & 79&-2.65 & 0.17 & $<$0.01 & 0.31 & 0.20 & 0.12    & $<$0.01 \\
Chemotherapy    &45& -0.54 & 0.27 & 0.04    & 0.04 & 0.11 & 0.73    & 0.06 \\
Radiotherapy    &256& -1.08 & 0.10 & $<$0.01 & 0.30 & 0.07 & $<$0.01 & $<$0.01 \\
Chemo. \& radio &34& -1.87 & 0.12 & $<$0.01 & 0.94 & 0.17 & $<$0.01 & $<$0.01\\
\hline
 \end{tabular}
\label{tab:lung}

{\footnotesize
Est., estimated location ($\beta$) or scale ($\gamma$) coefficient; SE, standard error; P-val, p-value. The joint p-value corresponds to testing that $\beta_j = \gamma_j = 0$.}
\end{table}

It is also of interest to test whether the overall effect of a given treatment is statistically significant, i.e., testing $\beta_j = \gamma_j = 0$ for the $j$th group. Asymptotic normality of the estimated parameter vector means that this can be achieved by comparing $(\hat\beta_j, \hat\gamma_j) \hat\Sigma^{-1}_{\beta_j, \gamma_j} (\hat\beta_j, \hat\gamma_j)^T$ to a $\chi^2_2$ distribution where $\Sigma_{\hat\beta_j, \hat\gamma_j}$ is the $2\times2$ covariance matrix for the pair $(\hat\beta_j, \hat\gamma_j)$. The resulting p-values for this test are shown in the last column of Table \ref{tab:lung}.

\begin{figure}[!htbp]
\includegraphics[width=0.95\textwidth, trim = 0.0cm 0.4cm 0.5cm 0.8cm, clip]{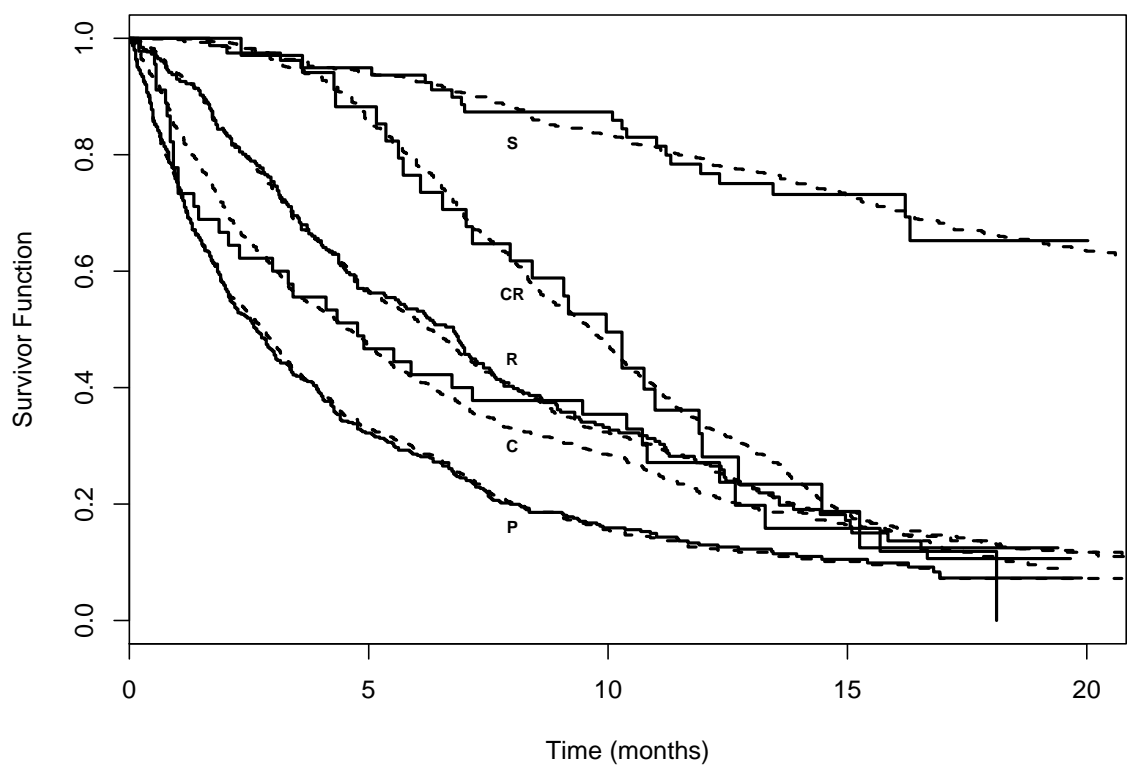}
\caption{Kaplan-Meier (solid) curves with model-based curves (dash) overlayed where
P $=$ palliative, C $=$ chemotherapy, R $=$ radiotherapy, CR $=$ chemotherapy \& radiotherapy combined, and S $=$ surgery, respectively.}\label{fig:KMfit}
\end{figure}

Figure \ref{fig:KMfit} compares the fitted survivor curves to the
Kaplan-Meier curves.  We can see that the model provides an excellent fit to the data. Furthermore, we see that, relative to
the palliative care curve, the various curves have different shapes,
particularly those of radiotherapy and the combined treatment. Indeed,
the curves converge at a rate which is indicative of a reduction in
treatment effectiveness over time, and is something which cannot be
handled by a basic accelerated failure time model. This highlights the
flexibility of the multi-parameter regression extension wherein the
scale parameter depends on covariates thereby facilitating survivor
curves corresponding to non-constant quantile ratios.

While the results of Table \ref{tab:lung} provide useful information
on the nature of the treatment effects, and which are statistically
significant, we now consider the quantile ratios which allow us to
quantify the effect of each treatment on lifetime. From Figure \ref{fig:QR}, we can see that the population assigned to surgery has the highest survival, and the difference is sustained with
time. Even though it may appear to diminish with time, it is clear
that a horizontal line corresponding to a constant quantile ratio
easily fits within the confidence bands. This is in line with the
non-significant scale coefficient seen in Table \ref{tab:lung}. The
combined treatment of chemotherapy and radiotherapy is particularly
effective early on but drops sharply in effectiveness over
time. Radiotherapy provides a more modest improvement in lifetime but
has a similar performance to the combined treatment later in
time. Finally, chemotherapy appears to have a relatively weak effect
over the whole lifetime.

\begin{figure}[!htbp]
\includegraphics[width=0.95\textwidth, trim = 0.0cm 0.4cm 0.5cm 0.8cm, clip]{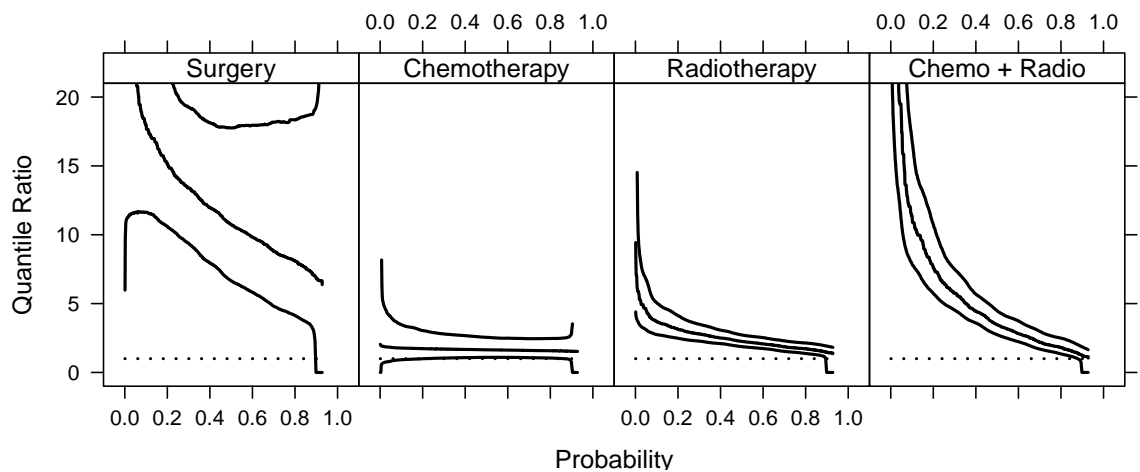}
\caption{Quantile ratios  (relative to palliative care)  for each treatment group relative to palliative care along with pointwise 95\% confidence intervals (solid). Reference line at unity (dot) also shown.\label{fig:QR}}
\end{figure}

\section{Discussion \label{discuss}}

In this paper we have extended the semiparametric accelerated failure time model to multi-parameter regression status by jointly modelling the location and scale parameters of its log-linear model representation. This brings together the structural flexibility of multi-parameter regression modelling and the robustness of an unspecified baseline hazard. The resulting model can be interpreted on the lifetime scale through its quantile ratio. The form of the quantile ratio directly generalizes that of the accelerated failure time model in that it depends on the quantile in question (rather than being constant over all quantiles) which allows for effects which change over the individual lifetime. Moreover, this modelling framework produces a new semiparametric test of the accelerated lifetime property for a given covariate which is adjusted for other covariates in the model.

Clearly, the combination of multi-parameter regression and semiparametric modelling is  fruitful, and some approaches in this direction exist in current literature, e.g., the  proportional-hazards-accelerated-failure-time hybrid of \citet{chenjewell:2001} and the proportional-hazards-Aalen hybrid of \citet{scheikezhang:2002, scheikezhang:2003}. However, in the aforementioned models, the two regression components essentially both correspond to distributional scale-type parameters, i.e., the components play similar roles. Furthermore, the models do not have a natural scale for interpretation which is related to the previous point. In contrast, the choice of jointly modelling location and dispersion of the error distribution (or scale and shape of the survival distribution) provides somewhat more ``orthogonal'' components for the inclusion of covariates -- heteroscedastic linear models are familiar in other areas such as econometrics -- and, as mentioned above, yields an interpretation on the lifetime scale. Note also that the Aalen model \citep{aalen:1980}, while not related to multi-parameter regression, is another flexible and robust survival regression model, being fully non-parametric, but which can be somewhat difficult to interpret (owing to the regression functions being on a cumulative hazard scale); \citet{martpipper:2013,martpipper:2014} considered its interpretation through the ``odds of concordance''.

On model interpretability, we could even go as far as criticizing hazard-based models in general, as did D.R. Cox when he stated that ``accelerated life models are in many ways more appealing [than hazard modelling] because of their quite direct physical interpretation'' \citep{reid:1994}, i.e., such models are interpretable on the lifetime scale as is the case for our model. While we might expect the basic accelerated failure time to be more popular on the basis of this ``direct physical interpretation'', inference in semiparametric accelerated failure time models has, historically, been comparatively more difficult than in the Cox model, which we discuss in the following paragraph.

Estimation of parameters and baseline cumulative hazard function for our model is based on the creation of a time-transformed counting process (see Section \ref{sec:esteq}), yielding a set of estimation equations which directly generalize those of the basic accelerated failure time model. It is noteworthy that our counting process formulation means that the estimating equations extend immediately to more general settings such as multiple events and time-varying covariates. Unlike those of the Cox model, however, the estimation equations for accelerated failure time models (and, hence, for our model) are such that the resulting covariance matrix for the estimators has a non-analytic form. However, we have overcome this by making use of modern empirical process theory in combination with a modification of a resampling procedure due to \citet{zeng08} (as described in Section \ref{sec:asymptotic} and the Online Supplementary Information). In particular, our proposal does not require resolving of estimation equations, and permits straightforward inference for any (conditional) survival functional of interest.

In summary, the semiparametric multi-parameter regression model of this paper achieves flexibility through its basic model structure, robustness with respect to distributional assumptions as its baseline distribution is unspecified, and is interpretable on the lifetime scale similar to that of the accelerated failure time model which it directly extends. Our inferential framework combines modern approaches in a novel way which is useful for rank-based estimation in general (beyond our setting and the accelerated failure time model which we generalize), for example, those used in the estimation of transformation models.

\section*{Acknowledgement}
This research was supported by the Irish Research Council (New Foundations Award) and the Royal Irish Academy (Charlemont Award).

\bibliographystyle{Chicago}
\bibliography{refs}

\newpage
\appendix

\centerline{\sc{Appendix}}
\medskip

\setcounter{thm}{0}

\section{Assumptions}

\begin{assumption}\label{ass:0}
  The parameter $\theta_0=(\beta_0^T,\gamma_0^T)^T$ lies in the interior of a compact set $\mathcal{B}\times\Gamma=\Theta\subset
  \R^{p+q}$.
\end{assumption}

\begin{assumption}\label{ass:1}
There exists constants
  $\tau<\infty$ and $\epsilon$, such that
  $pr(\varepsilon_{\theta}\geq \tau)\geq\epsilon>0$ for all $X$, $Z$
  and $\theta\in\Theta$.
\end{assumption}

\begin{assumption}\label{ass:2}
The covariates $X$ and $Z$ are uniformly bounded with probability one.
\end{assumption}

\begin{assumption}\label{ass:3}
  The error density $f$ and its derivative $f'$ are bounded and
  $\int \{f'(t)/f(t)\}^2f(t)dt<\infty$.
\end{assumption}

\begin{assumption}\label{ass:4}
$\log(C)$ has uniformly bounded densities.
\end{assumption}

\begin{assumption}\label{ass:5}
  The diagonal of $\rho(\varepsilon_{\theta})$ is differentiable in
  $\theta$ with bounded continuous derivative
  $\dot{\rho}_{\theta}(\varepsilon_{\theta})$ and
  $\left\{\rho(\varepsilon_{\theta}):\theta\in\Theta\right\}$ is a
  bounded Donsker class.
\end{assumption}

\begin{assumption}\label{ass:6}
 $\varepsilon_{0}$ has finite second order moment.
\end{assumption}

\begin{remark}
  Assumptions \ref{ass:3} and \ref{ass:4} are those assumed in
  \citet{nan09} directly adapted from \citet{ying93}.
\end{remark}

\section{Asymptotic results}

The asymptotic properties are established by extending the proofs of
\citet{nan09} to the multi parameter regression setting. We show asymptotic
linearity of $\Psi_n(\theta,\hat{\eta}_n)$ in $\theta$ in a
neighborhood of the true value $\theta_0$. This will rely heavily on
the fact that the function classes $\mathcal{F}_{0}$ and
$\mathcal{F}_{1}$ defined below have bracketing numbers of polynomial
order.
$$\mathcal{F}_0=\{I(\varepsilon_{\theta}\geq
t):t\in]-\infty,\tau],\theta\in\Theta\}$$ and
\begin{align*}
\mathcal{F}_{1}=\left\{ I(\varepsilon_{\theta}\geq t)(e^{\gamma^{T}Z}X^{T},Z^{T}) :t\in]-\infty,\tau],\theta\in\Theta\right\}.
\end{align*}

With $N_{[]}\{\epsilon,\mathcal{F}_{0},L_{2}(P)\}$ and $N_{[]}\{\epsilon,\mathcal{F}_{1},L_{2}(P)\}$ denoting the bracketing numbers of $\mathcal{F}_{0}$ and $\mathcal{F}_{1}$, respectively, we have the following result
\begin{lemma}\label{lemma:1}
There exist $K_{1}>0$, $K_{2}>0$ so that for all $\epsilon<1$,
\begin{align}
\label{eq:1}N_{[]}\{\epsilon,\mathcal{F}_{0},L_{2}(P)\}&\leq K_{1}\epsilon^{-3(p+q+3/2)},\\
\nonumber N_{[]}\{\epsilon,\mathcal{F}_{1},L_{2}(P)\}&\leq K_{2}\epsilon^{-(6p+7q+9)}.
\end{align}
\end{lemma}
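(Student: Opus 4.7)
The plan is to construct $L_2(P)$-brackets for $\mathcal{F}_0$ explicitly by combining a grid in $\theta$ with a grid in the threshold $t$, and then to extend the bound to $\mathcal{F}_1$ by augmenting with brackets for the smooth factor $w_\gamma(X,Z)=(e^{\gamma^T Z}X^T, Z^T)$. Two ingredients drive the construction for $\mathcal{F}_0$. First, the Lipschitz estimate $|\varepsilon_\theta - \varepsilon_{\theta'}| \leq C_1(1+|\log\tilde T|)\|\theta-\theta'\|$ holds on $\Theta$ by Assumptions \ref{ass:0} and \ref{ass:2}; the envelope $F=1+|\log\tilde T|$ lies in $L_2(P)$ by Assumption \ref{ass:6} but is \emph{not} uniformly bounded, which is what ultimately forces a truncation step. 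Second, $\varepsilon_\theta$ admits a density bounded uniformly in $\theta\in\Theta$: conditioning on $(X,Z)$, the density of $\log\tilde T = \log T \wedge \log C$ is at most the sum of the densities of $\log T$ (bounded via Assumption \ref{ass:3}) and $\log C$ (bounded by Assumption \ref{ass:4}), and the Jacobian $e^{-\gamma^T Z}$ is bounded by Assumption \ref{ass:2}.

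For a target bracket size $\epsilon$ I would choose a truncation level $M$, a $\delta$-net $\{\theta_k\}$ of $\Theta$ of cardinality $N_\theta \lesssim \delta^{-(p+q)}$, and for each $\theta_k$ a quantile-based grid $\{t_{k,j}\}$ in $t$ satisfying $P(\varepsilon_{\theta_k}\in[t_{k,j-1},t_{k,j}])\leq \eta$ with $N_t\lesssim\eta^{-1}$. For $(\theta,t)$ with nearest grid point $(\theta_k,t_{k,j})$, take $L(o) = I(\varepsilon_{\theta_k}(o)\geq t_{k,j}+C_1M\delta)$ and $U(o) = I(\varepsilon_{\theta_k}(o)\geq t_{k,j}-C_1M\delta) + I(|\log\tilde T|>M)$. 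The Lipschitz inequality gives $L\leq I(\varepsilon_\theta\geq t)\leq U$ on $\{|\log\tilde T|\leq M\}$, while $U\geq 1$ on the complement. Using the density bound together with Markov's inequality (invoked through Assumption \ref{ass:6}) one obtains $\|U-L\|_{L_2(P)}^2\lesssim M\delta + \eta + M^{-2}$; balancing (say $M\sim\epsilon^{-1}$, $\delta\sim\epsilon^3$, $\eta\sim\epsilon^2$) yields bracket $L_2$-size $\lesssim\epsilon$ and a count $N_\theta N_t$ of polynomial order in $\epsilon^{-1}$, of the form claimed in \eqref{eq:1} once the precise constants in the Lipschitz and density estimates are tracked.

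For $\mathcal{F}_1$, the extra multiplier $w_\gamma(X,Z)$ is uniformly bounded and Lipschitz in $\gamma$ on $\Theta$ by Assumption \ref{ass:2}, so a $\delta'$-grid in $\gamma$ with $\lesssim(\delta')^{-q}$ centres approximates $w_\gamma$ in sup-norm to precision $\delta'$. Forming the products of these brackets with those for $\mathcal{F}_0$, and tracking the extra additive contribution to the $L_2$-width coming from the $\gamma$-approximation, yields brackets for $\mathcal{F}_1$ of polynomial order; the larger exponent $6p+7q+9$ reflects the additional $\gamma$-grid after $\delta,\delta',\eta,M$ are rebalanced. The main obstacle throughout is the unboundedness of the Lipschitz envelope $F$: without truncation at level $M$ there is no way to convert a Lipschitz-in-$\theta$ bound for $\varepsilon_\theta$ into an $L_2(P)$-bracket width for its indicator, and it is the Markov-tail trade-off governing $M$ that forces the polynomial exponents to scale multiplicatively in $p+q$ (rather than linearly, as would happen for a bounded Lipschitz envelope), producing the specific forms in \eqref{eq:1}.
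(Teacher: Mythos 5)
Your construction is essentially the paper's own proof: a net in $\theta$ exploiting the Lipschitz bound for $\varepsilon_\theta$ with an $L_2(P)$ (not bounded) envelope (the paper gets these brackets from van der Vaart's Example 19.7), a grid in $t$, bracket widths controlled by the uniformly bounded density of $\varepsilon_\theta$ (bounded error and censoring densities) together with a Markov/Chebyshev truncation of the unbounded envelope (your level $M$ plays exactly the role of the paper's choice $\tilde{C}=\epsilon^{-1/3}$, and your quantile $t$-grid replaces the paper's equally spaced grid on $[-\epsilon^{-1/2},\tau]$ with the leftmost cell handled by Markov), and brackets for $\mathcal{F}_1$ obtained by multiplying with the bounded, Lipschitz-in-$\gamma$ weight class so that the polynomial exponents add. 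Only cosmetic repairs are needed — the lower bracket $L$ must be multiplied by $I(|\log\tilde{T}|\le M)$ so that it stays below the indicator on the tail event, the upper bracket should use the left grid endpoint of the cell containing $t$, and the Lipschitz factor is $C_1(1+M)\delta$ — and the fact that your bookkeeping yields a somewhat smaller exponent than $3(p+q+3/2)$ is harmless, since the lemma (and its use via Donsker/maximal-inequality arguments) only requires a polynomial-order upper bound.
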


\begin{proof}
  First note that due to the compactness of $\Theta$ and Assumption
  \ref{ass:2} there exist $m\in L_{2}(P)$ so that
  $|\varepsilon_{\theta_{1}}(o)-\varepsilon_{\theta_{2}}(o)|\leq
  m(o)\|\theta_{1}-\theta_{2}\|$. It follows as in \citet{vdvaart98}
  Example 19.7 that there exist $\tilde{K}_{1}>0$ and
  $\tilde{\delta}>0$ so that the class
  $\mathcal{G}=\{\varepsilon_{\theta}:\theta\in\Theta\}$ can be
  covered by $I$ brackets of the form
  $[f_{i}-\epsilon m,f_{i}+\epsilon m], \: f_{i}\in\mathcal{G}$ where
  $I\leq\tilde{K}_{1}\epsilon^{-(p+q)} \text{ for }
  \epsilon<\tilde{\delta}$. Now let
  $-\epsilon^{-1/2}=t_{1}<\ldots<t_{J}=\tau$ be a partition of the
  interval $[-\epsilon^{-1/2},\tau]$ so that
  $|t_{j}-t_{j+1}|\leq\epsilon$ and so that
  $J\leq 1+\tau\epsilon^{-1}+\epsilon^{-3/2}$. From this partition
  define
\begin{align*}
&l_{i,j}=f_{i}-\epsilon m-t_{j} \text{ for } j=1,\ldots, J, i=1,\ldots,I\\
&u_{i,j}=f_{i}+\epsilon m -t_{j-1} \text{ for } j=2,\ldots, J, i=1,\ldots,I\\
&u_{i,1}=\infty
\end{align*}
and note that the brackets $\{I(l_{i,j}\geq0),I(u_{i,j}\geq0)\}$ cover $\mathcal{F}_{0}$. Moreover note that from Assumptions \ref{ass:2} and \ref{ass:6} the class $\mathcal{G}$ has an $L_{2}(P)$ envelope which we shall term $F$. It now follows that
\begin{align*}
\|I(l_{i,1}\geq0)-I(u_{i,1}\geq0)\|_{2}^{2}&=\int\{1-I(l_{i,1}\geq0)\}^{2}dP\\
&=\int I(f_{i}-\epsilon m+\epsilon^{-1/2}<0)dP\\
&\leq pr(-F-\tilde{\delta}m<-\epsilon^{-1/2})\\
&\leq \|F+\tilde{\delta} m\|_{2}^{2}\epsilon
\end{align*}
for all $\epsilon\leq\tilde{\delta}$, where the last inequality
follows from direct application of Markov's generalized
inequality. Similarly for $j>1$ and for all $\tilde{C}>0$
\begin{align*}
\|I(l_{i,j}\geq0)-I(u_{i,j}\geq0)\|_{2}^{2}&=\int I(-\epsilon m+t_{j-1}\leq f_{i}\leq \epsilon m+t_{j})dP\\
&\leq\int I\{-\epsilon(m+1)\leq f_{i}-t_{j}\leq\epsilon(m+1)\}dP\\
&=\int_{\{m\geq \tilde{C}\}}I\{-\epsilon(m+1)\leq f_{i}-t_{j}\leq\epsilon (m+1)\}dP\\
&\phantom{=}\;+\int_{\{m < \tilde{C}\}}I\{-\epsilon (m+1)\leq f_{i}-t_{j}\leq\epsilon (m+1)\}dP\\
&\leq pr(m\geq \tilde{C})+pr\{|f_{i}-t_{j}|\leq\epsilon (\tilde{C}+1)\}.
\end{align*}
Using Markov's inequality we get
$pr(m\geq \tilde{C})\leq \|m\|_{2}^{2} \tilde{C}^{-2}$. Furthermore
from Assumptions \ref{ass:3} and \ref{ass:4} there exists a constant
$\tilde{K}_{2}$ such that for all $\epsilon$ and $\tilde{C}$
$$
pr\{|f_{i}-t_{j}|\leq \epsilon (\tilde{C}+1)\}\leq \tilde{K}_{2} (\tilde{C}+1)\epsilon.
$$
Consequently, by choosing $\tilde{C}=\epsilon^{-1/3}$, we obtain the following
bound for $j>1$
$$
\|I(l_{i,j}\geq0)-I(u_{i,j}\geq0)\|_{2}^{2}\leq \left(\|m\|_{2}^{2}+\tilde{K}_{2}\right)\epsilon^{2/3}+\tilde{K}_{2}\epsilon
$$
Combining all the bounds we conclude that there exists $K>0$ and
$\delta>0$ such that for all $\epsilon\leq\delta$, $N_{[]}\{K \epsilon^{1/3},\mathcal{F}_{0},L_{2}(P)\}\leq I J$. A rescaling then proves (\ref{eq:1}).

For the second part of the lemma note that $\mathcal{F}_{0}$ is
bounded as is
$\mathcal{G}_{1}=\{e^{\gamma^{T}Z}X^{T}: \gamma\in\Gamma\}$ and
$\mathcal{G}_{2}=\{Z^{T}\}$. Finally according to \citet[Example
19.7]{vdvaart98}, the bracketing number of $\mathcal{G}_{1}$ is less
than of the order $q$. Adding up we see that the bracketing number of
$\mathcal{F}_{1}$ is less than of the order
$q+3(p+q+3/2)+3(p+q+3/2)=6p+7q+9$.

\end{proof}

\begin{theorem}\label{thm:1} Assume that $\theta_0\in\Theta$ is the
unique solution of $\Psi(\theta,\eta_0(\cdot,\theta))=0$.  Then an approximate root $\hat{\theta}_n$ satisfying
$\Psi_n\{\hat{\theta}_n,\hat{\eta}_n(\cdot,\hat{\theta}_n)\}=o_{P^*}(1)$
is consistent for $\theta_0$.
\end{theorem}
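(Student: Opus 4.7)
The plan is to invoke the standard Z-estimator consistency argument: establish uniform convergence of the criterion function on the compact parameter space, combine with identifiability, and read off the conclusion. The three ingredients are (i) a uniform law of large numbers for the nuisance estimator $\hat\eta_n$, (ii) a uniform law for the resulting criterion $\Psi_n\{\theta,\hat\eta_n(\cdot,\theta)\}$ indexed by $\theta$, and (iii) a well-separation argument using the uniqueness of $\theta_0$ on the compact set $\Theta$.

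The first step is to show that $\hat\eta_n(t,\theta)$ converges to $\eta_0(t,\theta)$ uniformly over $(t,\theta)\in(-\infty,\tau]\times\Theta$ in outer probability. Writing $\hat\eta_n^{\beta}$ and $\hat\eta_n^{\gamma}$ as ratios of empirical means of functions drawn from (scaled versions of) $\mathcal{F}_1$ and $\mathcal{F}_0$, the polynomial bracketing bounds in Lemma \ref{lemma:1} make both classes Glivenko--Cantelli. Assumption \ref{ass:1} supplies a uniform lower bound $\epsilon>0$ on the limiting denominator $d^{(0)}(t,\theta)=P\{I(\varepsilon_\theta\geq t)\}$ for $t\leq\tau$ and $\theta\in\Theta$, so the ratios converge uniformly by the continuous mapping theorem.

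The second step is to prove
$$
\sup_{\theta\in\Theta}\bigl\|\Psi_n\{\theta,\hat\eta_n(\cdot,\theta)\}-\Psi\{\theta,\eta_0(\cdot,\theta)\}\bigr\|=o_{P^*}(1).
$$
I would decompose the left-hand side as $\{\Psi_n(\theta,\hat\eta_n)-\Psi_n(\theta,\eta_0)\}+(\Pn-P)\psi\{\cdot;\theta,\eta_0\}$. By Lemma \ref{lemma:1} together with Assumptions \ref{ass:2} and \ref{ass:5} (bounded covariates and a bounded Donsker weight $\rho$), the class $\{\psi(\cdot;\theta,\eta_0):\theta\in\Theta\}$ inherits polynomial bracketing numbers and is Glivenko--Cantelli, so the second summand vanishes uniformly. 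For the first summand, $\psi$ is linear in $\eta$ and its integrand is supported on $\{\varepsilon_\theta\leq\tau\}$ with bounded factors $\rho$, $\exp(\gamma^TZ)X$ and $Z$; combining this Lipschitz dependence with the uniform convergence of $\hat\eta_n$ from step one yields a uniform $o_{P^*}(1)$ bound.

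The third step combines uniform convergence with identifiability. On the compact set $\Theta$ (Assumption \ref{ass:0}) the map $\theta\mapsto\Psi\{\theta,\eta_0(\cdot,\theta)\}$ is continuous by dominated convergence (again using the boundedness hypotheses), and $\theta_0$ is by assumption its unique zero. Hence $\inf_{\|\theta-\theta_0\|\geq\delta}\|\Psi\{\theta,\eta_0(\cdot,\theta)\}\|>0$ for every $\delta>0$. Given $\Psi_n\{\hat\theta_n,\hat\eta_n(\cdot,\hat\theta_n)\}=o_{P^*}(1)$, the uniform convergence of step two forces $\|\Psi\{\hat\theta_n,\eta_0(\cdot,\hat\theta_n)\}\|=o_{P^*}(1)$, whence $\hat\theta_n\to\theta_0$ in outer probability by the standard Z-estimator consistency theorem (van der Vaart 1998, Theorem 5.9). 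The main obstacle is step one: controlling $\hat\eta_n(t,\theta)$ jointly in $t$ and $\theta$, where the $\theta$-dependence enters nonlinearly through the indicator $I(\varepsilon_\theta\geq t)$. This is precisely what Lemma \ref{lemma:1} is engineered to handle; once that uniform law is available, the remaining steps amount to routine empirical-process bookkeeping plus the identifiability hypothesis.
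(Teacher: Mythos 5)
Your proposal is correct and follows essentially the same route as the paper's proof: uniform convergence of $\hat\eta_n$ to $\eta_0$ via the bracketing bounds of Lemma \ref{lemma:1} and the lower bound on $d^{(0)}$ from Assumption \ref{ass:1}, then uniform convergence of $\Psi_n\{\theta,\hat\eta_n(\cdot,\theta)\}$ to $\Psi\{\theta,\eta_0(\cdot,\theta)\}$, and finally the identifiability/compactness (well-separation) step giving $\|\Psi\{\hat\theta_n,\eta_0(\cdot,\hat\theta_n)\}\|=o_{P^*}(1)$ and hence consistency. The only minor difference is your choice to add and subtract $\Psi_n(\theta,\eta_0)$ rather than, as in the paper, split off $(\Pn-P)\psi(\cdot;\theta,\hat\eta_n)$ (which forces the paper to argue that the random class involving $\hat\eta_n$ is Glivenko--Cantelli via a convex-hull argument); your split needs only the fixed class $\{\psi(\cdot;\theta,\eta_0):\theta\in\Theta\}$ plus the crude sup-norm bound on the nuisance perturbation, which is an equally valid and slightly leaner piece of bookkeeping.
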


\begin{proof}[(i)]

  Let $\|\cdot\|$ denote the supremum norm. Since $\theta_0$ is the
  unique solution to $\Psi\{\theta,\eta_0(\cdot,\theta)\}=0$ and
  $\Theta$ is compact it follows that for any fixed $\epsilon>0$,
  there exists a $\delta>0$ such that
  $pr\left(\|\hat{\theta}_n-\theta_0\|>\epsilon\right)\leq
  pr\left[\|\Psi\{\hat{\theta}_n,\eta_0(\cdot,\hat{\theta}_n)\}\|>\delta\right]$. If
  we can show that
\begin{align}\label{eq:2}
  \|\Psi\{\hat{\theta}_n,\eta_0(\cdot,\hat{\theta}_n)\}\|=o_{P^*}(1),
\end{align}
then the consistency of $\hat{\theta}_n$ follows.

  We first show that
  $\|\hat\eta_n(t,\theta)-\eta_0(t,\theta)\|=o_{P^*}(1)$. Define
\begin{align*}
D_{n}^{(0)}(t,\theta)&=\Pn\left\{I(\varepsilon_{\theta}\geq t)\right\},\\
D_{n}^{(1)}(t,\theta)&=\Pn \left\{I(\varepsilon_{\theta}\geq t)(e^{\gamma^{T}Z}X^{T},Z^{T})\right\},\\
d^{(0)}(t,\theta)&=P\left\{I(\varepsilon_{\theta}\geq t)\right\},\\
d^{(1)}(t,\theta)&=P\left\{I(\varepsilon_{\theta}\geq t)(e^{\gamma^{T}Z}X^{T},Z^{T})\right\}.
\end{align*}
Thus,
$\hat{\eta}_n(t,\theta)=D_{n}^{(1)}(t,\theta)/D_{n}^{(0)}(t,\theta)$
and
$\eta_0(t,\theta)=d^{(1)}(t,\theta)/d^{(0)}(t,\theta)$.

As the classes $\mathcal{F}_{0}$ and $\mathcal{F}_{1}$ are Donsker by
Lemma \ref{lemma:1} it follows that $\|D_n^{(0)}-d^{(0)}\|=o_{P^*}(1)$
and $\|D_{n}^{(1)}-d^{(1)}\|=o_{P^*}(1)$ and
$n^{1/2}\{D_{n}^{(k)}(t,\theta)-d^{(k)}(t,\theta)\}$ converge to zero
mean Gaussian processes on $]-\infty,\tau]\times\Theta$. Since
$D_n^{(0)}$ (almost surely) and $d^{(0)}$ are bounded away from zero,
\begin{align}\label{eq:3}
\|\hat\eta_n-\eta_0\|=o_{P^*}(1).
\end{align}

The random functions
$D_n^{(0)}(\varepsilon_{\theta},\theta)=n^{-1}\sum_{i=1}^n
I(\varepsilon_{\theta i}\geq \varepsilon_{\theta})$ and
$D_n^{(1)}(\varepsilon_{\theta},\theta)=n^{-1}\sum_{i=1}^n
I(\varepsilon_{\theta i}\geq \varepsilon_{\theta})(e^{\gamma^T
  Z}X^{T},Z^{T})$ can be expressed as the limit of convex combinations
of elements of the Donsker classes
$\{I(s\geq \varepsilon_{\theta}):s\in]-\infty,\tau],\theta\in\Theta\}$ and $\{I(s\geq \varepsilon_{\theta})(e^{\gamma^T
  Z}X^T,Z^T) : s\in]-\infty,\tau],\theta\in\Theta\}$ and
are bounded. Thus, they belong to the closed convex hull of those classes
which is Donsker by \citet[Theorem 2.10.3]{vdvaartwellner96}. By
Assumption \ref{ass:1}, $D_n^{(0)}$ is bounded away from zero almost
surely, so that
$\{\hat{\eta}_n(\varepsilon_{\theta},\theta):\theta\in\Theta\}$ is Donsker
by \citet[Example 2.10.9]{vdvaartwellner96}. See \citet[pp. 844-845]{kim13}.

The class of bounded functions
\begin{align*}
\left\{\psi(O;\theta,\hat\eta_n)=I(\varepsilon_{\theta}\leq\tau)\rho(\varepsilon_{\theta})\{(e^{\gamma^{T}Z}X^{T},Z^{T})-\hat{\eta}_{n}(\varepsilon_{\theta},\theta)\}^T\Delta:\theta\in\Theta\right\}
\end{align*}
is a Glivenko-Cantelli class. By adding and subtracting the same term,
and by the triangle inequality, we then have that
\begin{align*}
\|\Psi_n(\theta,\hat{\eta}_n)-\Psi(\theta,\eta_0)\|&=\|\Pn \psi(O;\theta,\hat{\eta}_n)-P \psi(O;\theta,\eta_0)\|\\
                                                   &\leq \|(\Pn-P) \psi(O;\theta,\hat{\eta}_n)\}\|\\
  &\phantom{=}\;+ \|P I(\varepsilon_{\theta}\leq\tau)\rho(\varepsilon_{\theta})\{\hat{\eta}_n(\varepsilon_{\theta},\theta)-\eta_0(\varepsilon_{\theta},\theta)\}\Delta\|
\end{align*}
The first term on the right-hand side converges to zero in outer
probability by the Glivenko-Cantelli property. Further,
\begin{align*}
 \|P I(\varepsilon_{\theta}\leq\tau)\rho(\varepsilon_{\theta})(\hat\eta_n-\eta_0)\Delta \|\leq \|\hat\eta_n-\eta_0\|\|PI(\varepsilon_{\theta}\leq\tau)\rho(\varepsilon_{\theta})\Delta\|=o_{P^*}(1),
\end{align*}
by (\ref{eq:3}). Thus, $\|\Psi_n(\theta,\hat\eta_n)-\Psi(\theta,\eta_0)\|=o_{P^*}(1)$ ,which establishes (\ref{eq:2}) as
\begin{align*}
\|\Psi\{\hat{\theta}_n,\eta_0(\cdot,\hat{\theta}_n)\}\|&\leq \|\Psi\{\hat{\theta}_n,\hat\eta_n(\cdot,\hat{\theta}_n)\}\|+\|\Psi\{\hat{\theta}_n,\hat\eta_n(\cdot,\hat{\theta}_n)\}-\Psi\{\hat{\theta}_n,\eta_0(\cdot,\hat{\theta}_n)\}\|\\
&=o_{P^*}(1)+o_{P^*}(1)=o_{P^*}(1).
\end{align*}
\end{proof}

\begin{lemma}\label{lem:2}
  Let $\hat{\theta}_n$ be an approximate root satisfying
  $\Psi_n\{\hat{\theta}_n,\hat{\eta}_n(\cdot,\hat{\theta}_n)\}=o_{P^*}(n^{-1/2})$. Suppose
  $\Psi\{\theta_0,\eta_0(\cdot,\theta_0)\}$ is differentiable with
  bounded continuous derivative
  $\dot{\Psi}_{\theta}\{\theta_0,\eta_0(\cdot,\theta_0)\}$, and
  $\dot{\Psi}_{\theta}\{\theta_0,\eta_0(\cdot,\theta_0)\}$ is
  non-singular. Then, $\|\hat{\eta}_n-\eta_0\|=O_{P^*}(n^{-1/2})$ and
  $\|\hat{\theta}_n-\theta_0\|=O_{P^*}(n^{-1/2})$.
\end{lemma}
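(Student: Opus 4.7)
The plan is to bootstrap the consistency of Theorem \ref{thm:1} into a root-$n$ rate in two stages: first obtain the rate for the nuisance estimator $\hat{\eta}_n$, and then feed that rate into an asymptotic linearisation of the estimating equation at $(\theta_0,\eta_0)$ to get the rate for $\hat{\theta}_n$.

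For $\hat{\eta}_n$, I would exploit the representation $\hat{\eta}_n=D_n^{(1)}/D_n^{(0)}$ and $\eta_0=d^{(1)}/d^{(0)}$ already used in Theorem \ref{thm:1}. Because $\mathcal{F}_0$ and $\mathcal{F}_1$ are Donsker by Lemma \ref{lemma:1}, one has $\sup_{t,\theta} n^{1/2}\bigl|D_n^{(k)}(t,\theta)-d^{(k)}(t,\theta)\bigr|=O_{P^*}(1)$ for $k=0,1$. Assumption \ref{ass:1} gives $d^{(0)}\ge\epsilon>0$ and, with inner probability tending to one, the same lower bound holds for $D_n^{(0)}$, so a routine ratio bound delivers $\|\hat{\eta}_n-\eta_0\|=O_{P^*}(n^{-1/2})$.

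For $\hat{\theta}_n$, I would start from $\Psi_n\{\hat{\theta}_n,\hat{\eta}_n(\cdot,\hat{\theta}_n)\}=o_{P^*}(n^{-1/2})$ and decompose
\begin{align*}
o_{P^*}(n^{-1/2}) &= (\Psi_n-\Psi)\{\hat{\theta}_n,\hat{\eta}_n\}-(\Psi_n-\Psi)\{\theta_0,\eta_0\} \\
&\phantom{=}\;+(\Psi_n-\Psi)\{\theta_0,\eta_0\} + \Psi\{\hat{\theta}_n,\hat{\eta}_n\}-\Psi\{\theta_0,\eta_0\}.
\end{align*}
The middle summand is $O_{P^*}(n^{-1/2})$ by the CLT. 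The first (equicontinuity) summand is $o_{P^*}(n^{-1/2})$: combining the Donsker property of $\{\psi(\cdot;\theta,\hat{\eta}_n):\theta\in\Theta\}$ (argued via the closed convex hull construction used in Theorem \ref{thm:1}, together with Assumption \ref{ass:5}) with the already-established consistency $(\hat{\theta}_n,\hat{\eta}_n)\to(\theta_0,\eta_0)$ and an $L_2(P)$-continuity check of $\psi$ at $(\theta_0,\eta_0)$, asymptotic equicontinuity applies. The remaining deterministic term I would split as
$$\bigl[\Psi\{\hat{\theta}_n,\hat{\eta}_n\}-\Psi\{\hat{\theta}_n,\eta_0(\cdot,\hat{\theta}_n)\}\bigr] + \bigl[\Psi\{\hat{\theta}_n,\eta_0(\cdot,\hat{\theta}_n)\}-\Psi\{\theta_0,\eta_0(\cdot,\theta_0)\}\bigr].$$
The first piece is $O(\|\hat{\eta}_n-\eta_0\|)=O_{P^*}(n^{-1/2})$ since $\psi$ is linear in $\eta$ with $\rho$ bounded. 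The second piece equals $\dot{\Psi}_\theta(\hat{\theta}_n-\theta_0)+o(\|\hat{\theta}_n-\theta_0\|)$ by the stated differentiability. Inverting the non-singular $\dot{\Psi}_\theta$ yields $\|\hat{\theta}_n-\theta_0\|\le O_{P^*}(n^{-1/2})+o_{P^*}(\|\hat{\theta}_n-\theta_0\|)$, which rearranges to $\|\hat{\theta}_n-\theta_0\|=O_{P^*}(n^{-1/2})$.

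The main obstacle I anticipate is the stochastic equicontinuity step: it requires exhibiting a Donsker class that contains $\psi(\cdot;\theta,\hat{\eta}_n)$ with high probability and verifying that $\psi(\cdot;\hat{\theta}_n,\hat{\eta}_n)\to\psi(\cdot;\theta_0,\eta_0)$ in $L_2(P)$. Careful bookkeeping with Lemma \ref{lemma:1}, Assumption \ref{ass:5} (Donskerness of the weight family), the boundedness enforced by Assumptions \ref{ass:1}--\ref{ass:2}, and dominated convergence (using Assumption \ref{ass:6} for an envelope) should close that gap, after which the linearisation and inversion are routine.
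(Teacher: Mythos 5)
Your proposal is correct and follows essentially the same two-stage route as the paper: the rate for $\hat{\eta}_n$ via the Donsker property of $\mathcal{F}_0,\mathcal{F}_1$ from Lemma \ref{lemma:1} and denominators bounded away from zero, then a linearisation of $\Psi$ in $(\theta,\eta)$ and inversion of the non-singular $\dot{\Psi}_{\theta}$. The only real difference is that where you invoke asymptotic equicontinuity (requiring the $L_2(P)$-convergence check of $\psi$ at $(\theta_0,\eta_0)$), the paper sidesteps this by noting that $\|\Gn \psi\{\theta,\hat{\eta}_n(\cdot,\theta)\}\|=O_{P^*}(1)$ uniformly over the fixed Donsker class containing $\{\psi(\cdot;\theta,\hat\eta_n):\theta\in\Theta\}$, which is all that is needed since an $O_{P^*}(n^{-1/2})$ bound on the centred empirical term suffices for the rate.
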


\begin{proof}
First consider the asymptotic representation of $\hat{\eta}_n$,
\begin{align}\label{eq:4}
\nonumber\MoveEqLeft n^{1/2}\{\hat{\eta}_n(t,\theta)-\eta_0(t,\theta)\}\\
\nonumber&=n^{1/2}\left[\frac{1}{d^{(0)}(t,\theta)}\left\{D_{n}^{(1)}(t,\theta)-d^{(1)}(t,\theta)\right\}-\frac{D_{n}^{(1)}(t,\theta)}{D_n^{(0)}(t,\theta)d^{(0)}(t,\theta)}\{D_n^{(0)}(t,\theta)-d^{(0)}(t,\theta)\}\right]\\
\nonumber&=n^{1/2}\left[\frac{1}{d^{(0)}(t,\theta)}\left\{D_{n}^{(1)}(t,\theta)-d^{(1)}(t,\theta)\right\}-\frac{d^{(1)}(t,\theta)}{d^{(0)}(t,\theta)^2}\{D_n^{(0)}(t,\theta)-d^{(0)}(t,\theta)\}\right]+o_{P^*}(1)\\
\nonumber&=d^{(0)}(t,\theta)^{-1}n^{1/2}\left[\left\{D_{n}^{(1)}(t,\theta)-D_n^{(0)}(t,\theta)\eta_0(t,\theta)\right\}-\{d^{(1)}(t,\theta)-d^{(0)}(t,\theta)\eta_0(t,\theta)\}\right]+o_{P^*}(1)\\
&=d^{(0)}(t,\theta)^{-1}\Gn I(\varepsilon_{\theta}\geq t)\left\{(e^{\gamma^{T}Z}X^{T},Z^{T})
-\eta_0(t,\theta)\right\}+o_{P^*}(1).
\end{align}
For the second equality we used that
\begin{align}\label{eq:5}
\|n^{1/2}\{D_{n}^{(k)}(t,\theta)-d^{(k)}(t,\theta)\}\|=O_{P^*}(1),~k=0,1.
\end{align}
This follows from Lemma \ref{lemma:1} which enables the use of
\citet[Theorem 2.14.9]{vdvaartwellner96} from which exponentially
decaying tail bounds are obtained.

The classes of functions
$\{I(\varepsilon_{\theta}\geq
t):t\in]-\infty,\tau],\theta\in\Theta\}$,
$\{\exp(\gamma^T Z)X^{T}:\theta\in\Theta\}$ and $\{Z\}$ are Donsker
and $\eta_0$ is a bounded deterministic function. Thus,
$\{I(\varepsilon_{\theta}\geq t)[\{\exp(\gamma^T
Z)X^{T},Z^{T}\}-\eta_0(t,\theta)]:t\in]-\infty,\tau],\theta\in\Theta\}$
is Donsker. Because $d^{(0)}(t,\theta)^{-1}$ is bounded,
$n^{1/2}\|\hat{\eta}_n-\eta_0\|=O_{P^*}(1)$.

Then,
\begin{align*}
\MoveEqLeft\|n^{1/2}[\Psi_n\{\theta,\hat{\eta}_n(\cdot,\theta)\}-\Psi\{\theta,\eta_0(\cdot,\theta)\}]\|\\
&=\left\|\Gn\psi\{\theta,\hat{\eta}_n(\varepsilon_{\theta},\theta)\}+n^{1/2}P[\psi\{\theta,\hat{\eta}_n(\varepsilon_{\theta},\theta)\}-\psi\{\theta,\eta_0(\varepsilon_{\theta},\theta)\}]\right\|\\
&\leq\|\Gn \psi\{\theta,\hat{\eta}_n(\varepsilon_{\theta},\theta)\}\|+n^{1/2}\|\hat{\eta}_n-\eta_0\|\|PI(\varepsilon_{\theta}\leq\tau)\rho(\epsilon_{\theta})\Delta\|\\
&=O_{P^*}(1)
\end{align*}
Using that $\Psi_n\{\hat{\theta},\hat{\eta}_n(\cdot,\hat{\theta})\}=o_{P^*}(n^{-1/2})$ and $\Psi\{\theta_0,\eta_0(\cdot,\theta_0)\}=0$,
\begin{align*}
O_{P^*}(1)&=-n^{1/2}\left[\Psi_n\{\hat{\theta},\hat{\eta}_n(\cdot,\hat{\theta})\}-\Psi\{\hat{\theta},\eta_0(\cdot,\hat{\theta})\}\right]\\
&=o_{P^*}(1)+n^{1/2}\Psi\{\hat{\theta},\eta_0(\cdot,\hat{\theta})\}-n^{1/2}\Psi\{\theta_0,\eta_0(\cdot,\theta_0)\}\\
&=o_{P^*}(1)+\left[\dot{\Psi}_{\theta}\{\theta_0,\eta_0(\cdot,\theta_0)\}+o_{P^*}(1)\right]n^{1/2}\left(\hat\theta_n-\theta_0\right).
\end{align*}
The invertibilty of $\dot{\Psi}_{\theta}$ gives $n^{1/2}\left(\hat\theta_n-\theta_0\right)=O_{P^*}(1)$.

\end{proof}

\begin{theorem}\label{thm:2}
  Let $\hat{\theta}_n$ be an approximate root satisfying
  $\Psi_n\{\hat{\theta}_n,\hat{\eta}(\cdot,\hat{\theta}_n)\}=o_{P^*}(n^{-1/2})$. Suppose
  that $\theta\mapsto\eta_0(\varepsilon_{\theta},\theta)$ is differentiable with
  uniformly bounded and continuous derivative
  $\dot{\eta}_{\theta}(\varepsilon_{\theta},\theta)$. Then if
  $\dot{\Psi}_{\theta}\{\theta_0,\eta_0(\cdot,\theta_0)\}$ is non-singular,
\begin{align*}
n^{1/2}(\hat{\theta}_{n}-\theta_0)&=-\dot{\Psi}_{\theta}^{-1}\{\theta_{0},\eta_0(\cdot,\theta_0)\}\Gn J(\theta_0,\eta_0,A_0)+o_{P^*}(1).
\end{align*}
\end{theorem}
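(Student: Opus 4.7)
The plan is a standard Z-estimator expansion in the presence of an infinite-dimensional nuisance parameter, following the template of \citet{nan09}, but carried through for the $(p+q)$-dimensional estimating function of the multi-parameter regression model. Starting from
\begin{align*}
o_{P^{*}}(1) = n^{1/2}\Psi_n\{\hat\theta_n,\hat\eta_n(\cdot,\hat\theta_n)\} - n^{1/2}\Psi\{\theta_0,\eta_0(\cdot,\theta_0)\},
\end{align*}
I would split the right-hand side into a stochastic part $\mathbb{G}_n \psi\{\hat\theta_n,\hat\eta_n\}$ and a deterministic drift $n^{1/2}[\Psi\{\hat\theta_n,\hat\eta_n\} - \Psi\{\theta_0,\eta_0\}]$. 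The first of these is handled by asymptotic equicontinuity: the class $\{\psi(\cdot;\theta,\eta):\theta\in\Theta,\|\eta-\eta_0\|<\delta\}$ is Donsker by Lemma \ref{lemma:1} together with the Donsker property of $\{\hat\eta_n\}$ established in Lemma \ref{lem:2}, so consistency of $\hat\theta_n$ (Theorem \ref{thm:1}) and of $\hat\eta_n$ (equation (\ref{eq:3})) give $\mathbb{G}_n\psi\{\hat\theta_n,\hat\eta_n\} = \mathbb{G}_n\psi\{\theta_0,\eta_0\} + o_{P^*}(1)$.

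For the deterministic drift I would write
\begin{align*}
n^{1/2}[\Psi\{\hat\theta_n,\hat\eta_n\}-\Psi\{\theta_0,\eta_0\}] = n^{1/2}[\Psi\{\hat\theta_n,\eta_0(\cdot,\hat\theta_n)\}-\Psi\{\theta_0,\eta_0(\cdot,\theta_0)\}] + n^{1/2}[\Psi\{\hat\theta_n,\hat\eta_n\}-\Psi\{\hat\theta_n,\eta_0\}].
\end{align*}
The first piece is a Taylor expansion along $\theta\mapsto\Psi\{\theta,\eta_0(\cdot,\theta)\}$ around $\theta_0$, which by the hypothesis of non-singular $\dot\Psi_\theta$ and the differentiability of $\theta\mapsto\eta_0(\varepsilon_\theta,\theta)$ equals $\dot\Psi_\theta\{\theta_0,\eta_0\} n^{1/2}(\hat\theta_n-\theta_0) + o_{P^*}(1)$, using the $n^{-1/2}$-rate from Lemma \ref{lem:2}.

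The second piece is the genuinely semiparametric term and will be the main obstacle. Since $\psi$ is exactly linear in $\eta$,
\begin{align*}
\Psi\{\hat\theta_n,\hat\eta_n\}-\Psi\{\hat\theta_n,\eta_0\} = -P\bigl[I(\varepsilon_{\hat\theta_n}\leq\tau)\rho(\varepsilon_{\hat\theta_n})(\hat\eta_n-\eta_0)(\varepsilon_{\hat\theta_n},\hat\theta_n)^T\Delta\bigr].
\end{align*}
I would first replace $\hat\theta_n$ by $\theta_0$ at a cost of $o_{P^*}(n^{-1/2})$, using the bounded continuous derivative $\dot\eta_\theta$, the rate $\hat\theta_n-\theta_0=O_{P^*}(n^{-1/2})$ from Lemma \ref{lem:2}, and Assumption \ref{ass:5}. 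Then, since $N^*(t)=\Delta I(\varepsilon_{\theta_0}\leq t)$ has intensity $Y^*(t)\alpha_0(t)$ under $P$, the martingale identity gives the Fubini-type reduction
\begin{align*}
P\bigl[I(\varepsilon_{\theta_0}\leq\tau)\rho(\varepsilon_{\theta_0})h(\varepsilon_{\theta_0},\theta_0)^T\Delta\bigr] = \int_{-\infty}^{\tau}\rho(t) h(t,\theta_0)^T d^{(0)}(t,\theta_0)\, dA_0(t)
\end{align*}
for any (random) deterministic function $h$. Plugging in the linearization of $\hat\eta_n-\eta_0$ from (\ref{eq:4}), the factor $d^{(0)}(t,\theta_0)$ cancels $d^{(0)}(t,\theta_0)^{-1}$ and, after interchanging $\mathbb{G}_n$ with the deterministic integral (justified by Donsker arguments analogous to those in Lemma \ref{lem:2}), one obtains
\begin{align*}
n^{1/2}[\Psi\{\hat\theta_n,\hat\eta_n\}-\Psi\{\hat\theta_n,\eta_0\}] = -\mathbb{G}_n \!\!\int_{-\infty}^{\tau}\!\!\rho(t) I(\varepsilon_{\theta_0}\geq t)\bigl[\{e^{\gamma_0^T Z}X^T,Z^T\}-\eta_0(t,\theta_0)\bigr]^T\! dA_0(t) + o_{P^*}(1).
\end{align*}
This correction term is precisely what turns $\mathbb{G}_n\psi\{\theta_0,\eta_0\}$ into $\mathbb{G}_n J(\theta_0,\eta_0,A_0)$ by construction of $J$. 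Collecting the three expansions and inverting $\dot\Psi_\theta$ yields the stated representation. The hardest technical step is the swap of $\mathbb{G}_n$ and the deterministic Stieltjes integral together with the simultaneous treatment of the $\theta$- and $\eta$-perturbations; all tools are however available through Lemma \ref{lemma:1}, Lemma \ref{lem:2}, and Assumptions \ref{ass:1}--\ref{ass:6}.
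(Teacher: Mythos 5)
Your proposal is correct in outline and, at its core, performs the same computations as the paper, but it is organized differently: you run a global Z-estimator (``master theorem'') expansion --- one stochastic-equicontinuity step giving $\Gn\psi\{\hat\theta_n,\hat\eta_n\}=\Gn\psi\{\theta_0,\eta_0\}+o_{P^*}(1)$, then a population drift split into a $\theta$-Taylor term and an $\eta$-perturbation term --- whereas the paper first establishes asymptotic linearity of the empirical criterion $\theta\mapsto\Psi_n\{\theta,\hat\eta_n(\cdot,\theta)\}$ over $\|\theta-\theta_0\|\leq Kn^{-1/2}$ via the four-way decomposition (\ref{eq:6}) (treating separately the $\theta$-dependence through $\hat\eta_n$, through $\rho$, through the indicator $I(\varepsilon_{\theta}\leq\tau)$, and through $e^{\gamma^TZ}X$), and only then expands $-n^{1/2}\Psi_n\{\theta_0,\hat\eta_n(\cdot,\theta_0)\}$ to produce $\Gn J$. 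Both routes identify the same matrix $\dot{\Psi}_{\theta}$ (the total derivative of $\theta\mapsto\Psi\{\theta,\eta_0(\cdot,\theta)\}$) and the same correction term, obtained from the linearization (\ref{eq:4}) together with the identity $dP_{\varepsilon_0,\Delta}(t,1)=d^{(0)}(t,\theta_0)\,dA_0(t)$ and an interchange of $\Gn$ with the Stieltjes integral; your version buys shorter bookkeeping at the price of handling the $(\theta,\eta)$ perturbations simultaneously in a single equicontinuity step.

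Two of your justifications need tightening. First, the class $\{\psi(\cdot;\theta,\eta):\theta\in\Theta,\ \|\eta-\eta_0\|<\delta\}$ indexed by an arbitrary sup-norm ball of functions $\eta$ is not Donsker; you must restrict $\eta$ to a class of controlled complexity that contains $\hat\eta_n$ with probability tending to one --- this is exactly the closed-convex-hull argument in the proof of Theorem \ref{thm:1} (it lives there, not in Lemma \ref{lem:2}). Second, replacing $\hat\theta_n$ by $\theta_0$ inside the drift $n^{1/2}P[I(\varepsilon_{\theta}\leq\tau)\rho(\varepsilon_{\theta})\{\hat\eta_n-\eta_0\}(\varepsilon_{\theta},\theta)^{T}\Delta]$ cannot be argued from boundedness of $\dot{\eta}_{\theta}$: that is the derivative of the limit $\eta_0$, whereas the increment to be controlled is that of the random, non-smooth map $\theta\mapsto(\hat\eta_n-\eta_0)(\varepsilon_{\theta},\theta)$, and $\|\hat\eta_n-\eta_0\|=O_{P^*}(n^{-1/2})$ alone makes each of the two drifts $O_{P^*}(1)$ without making their difference $o_{P^*}(1)$. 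The correct route is the paper's treatment of the first two terms of (\ref{eq:8}): represent both drifts through the uniform linearization (\ref{eq:10}) as integrated empirical processes and use the Donsker property plus $L_2(P)$-convergence of $\ell(t',\theta,\cdot)-\ell(t,\theta_0,\cdot)$ to conclude negligibility. Since you invoke precisely these tools for the subsequent interchange of $\Gn$ and the integral, the gap is repairable, but as written the $o_{P^*}(n^{-1/2})$ cost of that replacement step is asserted rather than established.
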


\begin{proof}
  Lemma \ref{lem:2} shows that there exists a $K<\infty$, such that
  $\|\theta-\theta_0\|\leq Kn^{-1/2}$. Then,
\begin{align}\label{eq:6}
\nonumber\MoveEqLeft n^{1/2}\left[\Psi_n\{\theta,\hat{\eta}_n(\cdot,\theta)\}-\Psi_n\{\theta_0,\hat{\eta}_n(\cdot,\theta_0)\}\right]\\
\nonumber&=n^{1/2}\left[\Pn I(\varepsilon_{\theta}\leq\tau)\rho(\varepsilon_{\theta})\left\{(e^{\gamma^TZ} X^{T},Z^{T})-\hat\eta_n(\varepsilon_{\theta},\theta)\right\}^{T}\Delta\right.\\
\nonumber&\left.\phantom{=+}\;-\Pn I(\varepsilon_{\theta}\leq\tau)\rho(\varepsilon_{\theta})\left\{(e^{\gamma^TZ} X^{T},Z^{T})-\hat\eta_n(\varepsilon_{0},\theta_{0})\right\}^{T}\Delta\right]\\
\nonumber&\phantom{=}+n^{1/2}\left[\Pn I(\varepsilon_{\theta}\leq\tau)\rho(\varepsilon_{\theta})\left\{(e^{\gamma^TZ} X^{T},Z^{T})-\hat\eta_n(\varepsilon_{0},\theta_{0})\right\}^{T}\Delta\right.\\
\nonumber&\left.\phantom{=+}\;-\Pn I(\varepsilon_{\theta}\leq\tau)\rho(\varepsilon_{0})\left\{(e^{\gamma^TZ} X^{T},Z^{T})-\hat\eta_n(\varepsilon_{0},\theta_{0})^{T}\right\}^{T}\Delta\right]\\
\nonumber&\phantom{=}+n^{1/2}\left[\Pn I(\varepsilon_{\theta}\leq\tau)\rho(\varepsilon_{0})\left\{(e^{\gamma^TZ} X^{T},Z^{T})-\hat\eta_n(\varepsilon_{0},\theta_{0})\right\}^{T}\Delta\right.\\
&\left.\phantom{=+}\;-\Pn I(\varepsilon_{0}\leq\tau)\rho(\varepsilon_{0})\left\{(e^{\gamma^TZ} X^{T},Z^{T})-\hat\eta_n(\varepsilon_{0},\theta_{0})\right\}^{T}\Delta\right]\\
\nonumber&\phantom{=}+n^{1/2}\left[\Pn I(\varepsilon_{0}\leq\tau)\rho(\varepsilon_{0})\left\{(e^{\gamma^TZ} X^{T},Z^{T})-\hat\eta_n(\varepsilon_{0},\theta_{0})\right\}^{T}\Delta\right.\\
\nonumber&\left.\phantom{=+}\;-\Pn I(\varepsilon_{0}\leq\tau)\rho(\varepsilon_{0})\left\{(e^{\gamma_{0}^TZ} X^{T},Z^{T})-\hat\eta_n(\varepsilon_{0},\theta_{0})\right\}^{T}\Delta\right].
\end{align}

Consider the first difference on the right-hand side above,
\begin{align}\label{eq:7}
\nonumber\MoveEqLeft -n^{1/2}\Pn I(\varepsilon_{\theta}\leq\tau)\rho(\varepsilon_{\theta})\{\hat{\eta}_n(\varepsilon_{\theta},\theta) - \hat{\eta}_n(\varepsilon_0,\theta_0)\}^{T}\Delta\\
&=-\Gn I(\varepsilon_{\theta}\leq\tau)\rho(\varepsilon_{\theta})\left\{\hat{\eta}_n(\varepsilon_{\theta},\theta)-\hat{\eta}_n(\varepsilon_0,\theta_0)\right\}^{T}\Delta\\
\nonumber&\phantom{=}\;-n^{1/2}P I(\varepsilon_{\theta}\leq\tau)\rho(\varepsilon_{\theta})\left\{\hat{\eta}_n(\varepsilon_{\theta},\theta)-\hat{\eta}_n(\varepsilon_0,\theta_0)\right\}^{T}\Delta.
\end{align}
The class
$\{I(\varepsilon_{\theta}\leq\tau)\rho(\varepsilon_{\theta})\hat{\eta}_n(\varepsilon_{\theta},\theta)^{T}\Delta:\theta\in\Theta\}$
is Donsker by the arguments used in the proof of Theorem \ref{thm:1},
and $I(\varepsilon_{\theta}\leq\tau)\rho(\varepsilon_{\theta})\left\{\hat{\eta}_n(\varepsilon_{\theta},\theta)-\hat{\eta}_n(\varepsilon_{0},\theta_0)\right\}^{T}\Delta$
  converges to zero in $L_2(P)$. Thus, the first term on the
right-hand side above is $o_{P^*}(1)$.

For the second term on the right-hand side of (\ref{eq:7}),
\begin{align}\label{eq:8}
\nonumber\MoveEqLeft n^{1/2}P I(\varepsilon_{\theta}\leq\tau)\rho(\varepsilon_{\theta})\left\{\hat{\eta}_n(\varepsilon_{\theta},\theta)-\hat{\eta}_n(\varepsilon_0,\theta_0)\right\}^{T}\Delta\\
\nonumber&= n^{1/2}P I(\varepsilon_{\theta}\leq\tau)\rho(\varepsilon_{\theta})\left\{\hat{\eta}_n(\varepsilon_{\theta},\theta)-\eta_0(\varepsilon_{\theta},\theta)\right\}^{T}\Delta\\
&\phantom{=}\;- n^{1/2}P I(\varepsilon_{\theta}\leq\tau)\rho(\varepsilon_{\theta})\left\{\hat{\eta}_n(\varepsilon_0,\theta_0)-\eta_0(\varepsilon_0,\theta_0)\right\}^{T}\Delta\\
\nonumber&\phantom{=}\;+n^{1/2}P I(\varepsilon_{\theta}\leq\tau)\rho(\varepsilon_{\theta})\left\{\eta_0(\varepsilon_{\theta},\theta)-\eta_0(\varepsilon_0,\theta_0)\right\}^{T}\Delta.
\end{align}

We now argue that the first two terms on the right-hand side of (\ref{eq:8}) are asymptotically negligible. Similar to (\ref{eq:4}),
\begin{align}\label{eq:10}
\nonumber\MoveEqLeft n^{1/2}P I(\varepsilon_{\theta}\leq\tau)\rho(\varepsilon_{\theta})\left\{\hat{\eta}_n(\varepsilon_{\theta},\theta)-\eta_0(\varepsilon_{\theta},\theta)\right\}^{T}\Delta\\
\nonumber&=n^{1/2}P I(\varepsilon_{\theta}\leq\tau)\rho(\varepsilon_{\theta})\left[\frac{1}{d^{(0)}(\varepsilon_{\theta},\theta)}\left\{D_{n}^{(1)}(\varepsilon_{\theta},\theta_0)-d^{(1)}(\varepsilon_{\theta},\theta)\right\}\right.\\
\nonumber&\phantom{=}\;\left.-\frac{D_n^{(1)}(\varepsilon_{\theta},\theta)}{D_n^{(0)}(\varepsilon_{\theta},\theta)d^{(0)}(\varepsilon_{\theta},\theta)}\left\{D_{n}^{(0)}(\varepsilon_{\theta},\theta)-d^{(0)}(\varepsilon_{\theta},\theta)\right\}\right]^{T}\Delta\\
&=n^{1/2}\int I(t'\leq\tau)\rho(t')\left[\frac{1}{d^{(0)}(t',\theta)}\left\{D_{n}^{(1)}(t',\theta_0)-d^{(1)}(t',\theta)\right\}\right.\\
\nonumber&\phantom{=}\;\left.-\frac{D_n^{(1)}(t',\theta)}{d^{(0)}(t',\theta)d^{(0)}(t',\theta)}\left\{D_{n}^{(0)}(t',\theta)-d^{(0)}(t',\theta)\right\}\right]^{T}\delta dP_{\varepsilon_0,\Delta,X,Z}(t,\delta,x,z)
\end{align}
where $t'=t'(\theta,x,z)=\exp\{(\gamma-\gamma_0)^Tz\}t+\exp\{\gamma^Tz\}(\beta-\beta_0)^Tx$ and $P_{\varepsilon_0,\Delta,X,Z}$ is the joint probability law of $(\varepsilon_0,\Delta,X,Z)$.

Now, as in \citet{nan09}[p. 2368],
\begin{align*}
\MoveEqLeft \left\|n^{1/2}\int I(t'\leq\tau)\rho(t')\left[\frac{1}{d^{(0)}(t',\theta)}\left\{D_{n}^{(1)}(t',\theta)-d^{(1)}(t',\theta)\right\}\right.\right.&\\
&\phantom{=}\;\left.-\frac{D_n^{(1)}(t',\theta)}{D_n^{(0)}(t',\theta)d^{(0)}(t',\theta)}\left\{D_{n}^{(0)}(t',\theta)-d^{(0)}(t',\theta)\right\}\right]^{T}\delta dP_{\varepsilon_0,\Delta,X,Z}(t,\delta,x,z)\\
&-n^{1/2}\int I(t'\leq\tau)\rho(t')\left[\frac{1}{d^{(0)}(t',\theta)}\left\{D_{n}^{(1)}(t',\theta)-d^{(1)}(t',\theta)\right\}\right.\\
&\phantom{=}\;\left.\left.-\frac{d^{(1)}(t',\theta)}{d^{(0)}(t',\theta)^2}\left\{D_{n}^{(0)}(t',\theta)-d^{(0)}(t',\theta)\right\}\right]^{T}\delta dP_{\varepsilon_0,\Delta,X,Z}(t,\delta,x,z)\right\|\\
&=\left\|n^{1/2}\int I(t'\leq\tau)\rho(t')\left\{\frac{d^{(1)}(t',\theta)}{d^{(0)}(t',\theta)^2}-\frac{D_n^{(1)}(t',\theta)}{D_n^{(0)}(t',\theta)d^{(0)}(t',\theta)}\right\}^{T}\right.\\
&\phantom{=}\;\left.\times\left\{D_{n}^{(0)}(t',\theta)-d^{(0)}(t',\theta)\right\}\delta dP_{\varepsilon_0,\Delta,X,Z}(t,\delta,x,z)\right\|\\
&\leq\left\|\rho(t)\right\|\left\|\frac{d^{(1)}(t,\theta)}{d^{(0)}(t,\theta)^2}-\frac{D_n^{(1)}(t,\theta)}{D_n^{(0)}(t,\theta)d^{(0)}(t,\theta)}\right\|\left\|n^{1/2}\left\{D_{n}^{(0)}(t,\theta)-d^{(0)}(t,\theta)\right\}\right\|\\
&=O_{P^*}(1)o_{P^*}(1)O_{P^*}(1)=o_{P^*}(1).
\end{align*}
Thus, similar to (\ref{eq:4}), (\ref{eq:10}) is
\begin{align*}
  \MoveEqLeft \int I(t'\leq\tau)\rho(t')\delta d^{(0)}(t',\theta)^{-1}\\
&\phantom{=}\;\times\Gn I(\varepsilon_{\theta}\geq t')\left[\left\{\exp(\gamma^T Z)X^T, Z^T\right\}-\eta_0(t',\theta)\right]^{T}
  dP_{\varepsilon_0,\Delta,X,Z}(t,\delta,x,z)\\
&=\int\Gn I(t'\leq\tau)\rho(t')\ell(t',\theta,X,Z,\varepsilon_{\theta})dP_{\varepsilon_0,\Delta,X,Z}(t,1,x,z)
\end{align*}
where $\ell(t',\theta,X,Z,\varepsilon_{\theta})=d^{(0)}(t',\theta)^{-1}I(\varepsilon_{\theta}\geq t')\left[\left\{\exp(\gamma^T Z)X^T, Z^T\right\}-\eta_0(t',\theta)\right]^{T}$.

Similarly,
\begin{align*}
\MoveEqLeft n^{1/2}P I(\varepsilon_{\theta}\leq\tau )\rho(\varepsilon_{\theta})\left\{\hat{\eta}_n(\varepsilon_0,\theta_0)-\eta_0(\varepsilon_0,\theta_0)\right\}^{T}\Delta\\
&=\int\Gn I(t'\leq\tau )\rho(t')\ell(t,\theta_0,X,Z,\varepsilon_{0})dP_{\varepsilon_0,\Delta,X,Z}(t,1,x,z)+o_{P^*}(1).
\end{align*}
Thus, the first two terms on the right-hand side of (\ref{eq:8}) equate to
\begin{align*}
\int\Gn I(t'\leq\tau)\rho(t')\left\{\ell(t',\theta,X,Z,\varepsilon_{\theta})-\ell(t,\theta_0,X,Z,\varepsilon_{0})\right\}dP_{\varepsilon_0,\Delta,X,Z}(t,1,x,z)
\end{align*}
which is $o_{P^*}(1)$ as
$\{\ell(t,\theta,X,Z,\varepsilon_{\theta}):t\in ]-\infty,\tau],\theta\in\Theta\}$ is a
Donsker class, and
$\ell(t',\theta,X,Z,\varepsilon_{\theta})-\ell(t,\theta_0,X,Z,\varepsilon_0)$
converges to zero in $L_2(P)$.

Thus, (\ref{eq:8}) is
\begin{align}\label{eq:11}
\nonumber \MoveEqLeft n^{1/2}P I(\varepsilon_{\theta}\leq\tau )\rho(\varepsilon_{\theta})\left\{\eta_0(\varepsilon_{\theta},\theta)-\eta_0(\varepsilon_0,\theta_0)\right\}^{T}\Delta+o_{P^*}(1)\\
&=n^{1/2}\left\{P I(\varepsilon_{\theta}\leq\tau )\rho(\varepsilon_{\theta})\dot{\eta}_{\theta}(\varepsilon_0,\theta_0)^{T}+o_{P^*}(1)\right\}\Delta(\theta-\theta_0) +o_{P^*}(1)\\
\nonumber &=\left\{P I(\varepsilon_{0}\leq\tau )\rho(\varepsilon_{0})\dot{\eta}_{\theta}(\varepsilon_0,\theta_0)^{T}\Delta\right\} n^{1/2}(\theta-\theta_0) +o_{P^*}(1).
\end{align}
The first equality in (\ref{eq:11}) follows from using the assumption
of bounded density functions for failure and censoring times, that is,
assumptions \ref{ass:3} and \ref{ass:4}, together with the dominated
convergence theorem.

The second difference on the right-hand side of (\ref{eq:6}) is
\begin{align}\label{eq:12}
\nonumber \MoveEqLeft n^{1/2}\Pn I(\varepsilon_{\theta}\leq\tau)\left\{\rho(\varepsilon_{\theta})-\rho(\varepsilon_{0})\right\}\left\{(e^{\gamma^TZ} X^{T},Z^{T})-\hat\eta_n(\varepsilon_{0},\theta_{0})\right\}^{T}\Delta\\\
&=\Pn I(\varepsilon_{\theta}\leq\tau)diag\left\{(e^{\gamma^TZ} X^{T},Z^{T})-\hat\eta_n(\varepsilon_{0},\theta_{0})\right\}\left\{\dot{\rho}_{\theta}(\varepsilon_{0})+o_{P^*}(1)\right\}\Delta n^{1/2}(\theta-\theta_0)\\
\nonumber &=\left[PI(\varepsilon_{0}\leq\tau)diag\left\{(e^{\gamma_{0}^T Z} X^{T},Z^{T})-\eta_{0}(\varepsilon_{0},\theta_{0})\right\}\dot{\rho}_{\theta}(\varepsilon_{0}) \Delta\right] n^{1/2}(\theta-\theta_0)+o_{P^*}(1).
\end{align}

For the third difference on the right-hand side of (\ref{eq:6}) let
$h(t,\theta,X,Z)=pr(\varepsilon_{\theta}\leq t,\:\Delta=1|X,Z)$. Then,
according to assumptions 2, 3 and 4, $h$ has bounded continuous
derivative $\dot{h}_{\theta}(t,\theta,X,Z)$ with respect to $\theta$. We will establish
\begin{align}\label{eq:13}
\MoveEqLeft n^{1/2}\Pn \left\{I(\varepsilon_{\theta}\leq\tau)-I(\varepsilon_{0}\leq\tau)\right\}\rho(\varepsilon_{0})\left\{(e^{\gamma^TZ} X^{T},Z^{T})-\hat\eta_n(\varepsilon_{0},\theta_{0})\right\}^{T}\Delta\\\
\nonumber&=\left[P\rho(\tau)\left\{(e^{\gamma_{0}^T Z} X^{T},Z^{T})-\eta_{0}(\tau,\theta_{0})^{T}\right\}\dot{h}_{\theta}(\tau,\theta_{0},X,Z)\right] n^{1/2}(\theta-\theta_0)+o_{P^*}(1).
\end{align}
To see this, first note that
\begin{align*}
\MoveEqLeft n^{1/2}\Pn \left\{I(\varepsilon_{\theta}\leq\tau)-I(\varepsilon_{0}\leq\tau)\right\}\rho(\varepsilon_{0})\left\{(e^{\gamma^TZ} X^{T},Z^{T})-\hat\eta_n(\varepsilon_{0},\theta_{0})\right\}^{T}\Delta\\
&=\Gn \left\{I(\varepsilon_{\theta}\leq\tau)-I(\varepsilon_{0}\leq\tau)\right\}\rho(\varepsilon_{0})\left\{(e^{\gamma^TZ} X^{T},Z^{T})-\hat\eta_n(\varepsilon_{0},\theta_{0})\right\}^{T}\Delta\\
&\phantom{=}\;+n^{1/2}P \left\{I(\varepsilon_{\theta}\leq\tau)-I(\varepsilon_{0}\leq\tau)\right\}\rho(\varepsilon_{0})\left\{(e^{\gamma^TZ} X^{T},Z^{T})-\hat\eta_n(\varepsilon_{0},\theta_{0})\right\}^{T}\Delta.
\end{align*}
By similar arguments as above the first term on the right rand side is $o_{P^*}(1) $. For the second term on the right hand side we have
\begin{align*}
\MoveEqLeft n^{1/2}P \left\{I(\varepsilon_{\theta}\leq\tau)-I(\varepsilon_{0}\leq\tau)\right\}\rho(\varepsilon_{0})\left\{(e^{\gamma^TZ} X^{T},Z^{T})-\hat\eta_n(\varepsilon_{0},\theta_{0})\right\}^{T}\Delta\\
&=n^{1/2}P \left\{I(\varepsilon_{\theta}\leq\tau)-I(\varepsilon_{0}\leq\tau)\right\}\rho(\varepsilon_{0})\left\{(e^{\gamma_{0}^TZ} X^{T},Z^{T})-\eta_{0}(\varepsilon_{0},\theta_{0})\right\}^{T}\Delta\\
&\phantom{=}\;+n^{1/2}P \left\{I(\varepsilon_{\theta}\leq\tau)-I(\varepsilon_{0}\leq\tau)\right\}\rho(\varepsilon_{0})\left\{(e^{\gamma^TZ} X^{T},Z^{T})-(e^{\gamma_{0}^TZ} X^{T},Z^{T})\right\}^{T}\Delta\\
&\phantom{=}\;-n^{1/2}P \left\{I(\varepsilon_{\theta}\leq\tau)-I(\varepsilon_{0}\leq\tau)\right\}\rho(\varepsilon_{0})\left\{\hat\eta_n(\varepsilon_{0},\theta_{0})-\eta_{0}(\varepsilon_{0},\theta_{0})\right\}^{T}\Delta.
\end{align*}
The second term on the right hand side is clearly $o_{P^*}(1) $. For the third term note that, similar to (\ref{eq:10}), we have
\begin{align*}
\MoveEqLeft n^{1/2}P \left\{I(\varepsilon_{\theta}\leq\tau)-I(\varepsilon_{0}\leq\tau)\right\}\rho(\varepsilon_{0})\left\{\hat\eta_n(\varepsilon_{0},\theta_{0})-\eta_{0}(\varepsilon_{0},\theta_{0})\right\}^{T}\\
&=\int\Gn \{I(t'\leq\tau)-I(t\leq\tau)\}\rho(t)\ell(t,\theta_0,X,Z,\varepsilon_0)dP_{\varepsilon_0,\Delta,X,Z}(t,1,x,z)+o_{P^*}(1).
\end{align*}
Accordingly this term is also $o_{P^*}(1) $ since $\{I(t'\leq\tau)-I(t\leq\tau)\}\rho(t)\ell(t,\theta_0,X,Z,\varepsilon_0)$ converges to zero in $L_2(P)$.
Finally, for the first term we have
\begin{align*}
\MoveEqLeft n^{1/2}P \left\{I(\varepsilon_{\theta}\leq\tau)-I(\varepsilon_{0}\leq\tau)\right\}\rho(\varepsilon_{0})\left\{(e^{\gamma_{0}^TZ} X^{T},Z^{T})-\eta_{0}(\varepsilon_{0},\theta_{0})\right\}^{T}\Delta\\
&=\left[P\rho(\tau)\left\{(e^{\gamma_{0}^T Z} X^{T},Z^{T})-\eta_{0}(\tau,\theta_{0})^{T}\right\}\dot{h}_{\theta}(\tau,\theta_{0},X,Z)\right] n^{1/2}(\theta-\theta_0)+o_{P^*}(1),
\end{align*}
where we have used that for a continuous density $p$ on the real line, continuous function $f$ with $\int_{-\infty}^{\infty}|f(t)|p(t)dt<\infty$, and continuously differentiable function $g(\theta)$ we have:
\begin{align*}
\int_{-\infty}^{g(\theta)}f(s)p(s)ds-\int_{-\infty}^{g(\theta_{0})}f(s)p(s)ds &=f\{g(\theta_{0})\}p\{g(\theta_{0})\}\{g(\theta)-g(\theta_{0})\}+o(|g(\theta)-g(\theta_{0}|)\\
&=f\{g(\theta_{0})\}p\{g(\theta_{0})\}\dot{g}_{\theta}(\theta_{0})(\theta-\theta_{0})+o(\|\theta-\theta_{0}\|).
\end{align*}
Now, consider the last difference on the right-hand side of (\ref{eq:6}):
\begin{align}\label{eq:14}
\nonumber\MoveEqLeft n^{1/2}\Pn I(\varepsilon_{0}\leq \tau)\rho(\varepsilon_0)\left[\begin{array}{c}
\{\exp(\gamma^TZ)-\exp(\gamma_0^TZ)\}X\\
0_{q\times 1}
\end{array}\right]\Delta\\
\nonumber&=\Pn I(\varepsilon_{0}\leq \tau)\rho(\varepsilon_0)\left\{\begin{array}{cc}
0_{p\times q} & XZ^T\exp(\gamma_0^TZ)\\
0_{q\times q} & 0_{q\times p}
\end{array}\right\}\Delta n^{1/2}(\theta-\theta_0)+o_{P^*}(1)\\
&=\left[P I(\varepsilon_{0}\leq \tau)\rho(\varepsilon_0)\left\{\begin{array}{cc}
0_{p\times q} & XZ^T\exp(\gamma_0^TZ)\\
0_{q\times q} & 0_{q\times p}
\end{array}\right\}\Delta \right]n^{1/2}(\theta-\theta_0)+o_{P^*}(1).
\end{align}
From (\ref{eq:11}), (\ref{eq:12}), (\ref{eq:13}), and (\ref{eq:14}), (\ref{eq:6}) is
\begin{align*}
n^{1/2}\left[\Psi_n\{\theta,\hat{\eta}_n(\cdot,\theta)\}-\Psi_n\{\theta_0,\hat{\eta}_n(\cdot,\theta_0)\}\right]=\dot{\Psi}_{\theta}\{\theta_0,\eta_0(\cdot,\theta_0)\} n^{1/2}(\theta-\theta_0)+o_{P^*}(1).
\end{align*}
On the other hand, inserting $\hat{\theta}_n$ into the left-hand side above,
\begin{align*}
\MoveEqLeft n^{1/2}\Psi_n\{\hat{\theta}_n,\hat{\eta}_n(\cdot,\hat{\theta}_n)\}-n^{1/2}\Psi_n\{\theta_0,\hat{\eta}_n(\cdot,\theta_0)\}\\
&=o_{P^*}(1)-n^{1/2}\Psi_n\{\theta_0,\hat{\eta}_n(\cdot,\theta_0)\}\\
&=o_{P^*}(1)-\Gn \psi\{O;\theta_0,\eta_0(\cdot,\theta_0)\}+\Gn I(\varepsilon_{0}\leq\tau) \rho(\varepsilon_{0})\left\{\hat{\eta}_n(\varepsilon_0,\theta_0) - \eta_0(\varepsilon_0,\theta_0)\right\}^{T}\Delta\\
&\phantom{=}\;+n^{1/2}P  I(\varepsilon_{0}\leq\tau) \rho(\varepsilon_{0})\left\{\hat{\eta}_n(\varepsilon_0,\theta_0) - \eta_0(\varepsilon_0,\theta_0)\right\}^{T}\Delta.
\end{align*}
The third term on the right-hand side above is $o_{P^*}(1)$. For the last term above,
\begin{align*}
\MoveEqLeft n^{1/2}P I(\varepsilon_{0}\leq\tau) \rho(\varepsilon_{0})\left\{\hat{\eta}_n(\varepsilon_0,\theta_0) - \eta_0(\varepsilon_0,\theta_0)\right\}^{T}\Delta\\
%&=n^{1/2}\int_{-\infty}^{\tau} \rho(t)\left[\frac{1}{d^{(0)}(t,\theta_0)}\left\{D_{n}^{(1)}(t,\theta_0)-d^{(1)}(t,\theta_0)\right\}\right.\\
%&\phantom{=}\;\left.-\frac{D_n^{(1)}(t,\theta_0)}{D_n^{(0)}(t,\theta_0)d^{(0)}(t,\theta_0)}\left\{D_{n}^{(0)}(t,\theta_0)-d^{(0)}(t,\theta_0)\right\}\right]^{T} dP_{\varepsilon_0,\Delta}(t,1)\\
%&=n^{1/2}\int_{-\infty}^{\tau} \rho(t)\left[\frac{1}{d^{(0)}(t,\theta_0)}\left\{D_{n}^{(1)}(t,\theta_0)-d^{(1)}(t,\theta_0)\right\}\right.\\
%&\phantom{=}\;\left.-\frac{d^{(1)}(t,\theta_0)}{d^{(0)}(t,\theta_0)^2}\left\{D_{n}^{(0)}(t,\theta_0)-d^{(0)}(t,\theta_0)\right\}\right]^{T} dP_{\varepsilon_0,\Delta}(t,1)+o_{P^*}(1)\\
&=n^{1/2}\int_{-\infty}^{\tau} \rho(t)d^{(0)}(t,\theta)^{-1}\left[\left\{D_{n}^{(1)}(t,\theta_0)-D_n^{(0)}(t,\theta_0)\eta_0(t,\theta_0)\right\}\right.\\
&\phantom{=}\;\left.-\left\{d^{(1)}(t,\theta_0)-d^{(0)}(t,\theta_0)\eta_0(t,\theta_0)\right\}\right]^{T} dP_{\varepsilon_0,\Delta}(t,1)+o_{P^*}(1)\\
&=\int_{-\infty}^{\tau} \rho(t)d^{(0)}(t,\theta_0)^{-1}\Gn I(\varepsilon_{0}\geq t)\left[\left\{\exp(\gamma_{0}^{T}Z)X^{T},Z^{T}\right\}-\eta_{0}(t,\theta_{0})\right]^{T} dP_{\varepsilon_0,\Delta}(t,1)+o_{P^*}(1)\\
&=\Gn\int_{-\infty}^{\tau} \rho(t) I(\varepsilon_{0}\geq t)\left[\left\{\exp(\gamma_{0}^{T}Z)X^{T},Z^{T}\right\}-\eta_{0}(t,\theta_{0})\right]^{T} dA_0(t)+o_{P^*}(1),
\end{align*}
where $A_0$ is the cumulative hazard of the error term
$e=e^{\gamma_0^TZ}(\log T+\beta_0^T X)$.

Thus, combining the three displays above
\begin{align*}
\MoveEqLeft n^{1/2}(\hat{\theta}_{n}-\theta_0)\\
&=-\dot{\Psi}_{\theta}^{-1}\{\theta_0,\eta_0(\cdot,\theta_0)\}\\
&\phantom{=}\;\times\Gn\left(\psi\{O;\theta_0,\eta_0(\cdot,\theta_0)\}-\int_{-\infty}^{\tau} \rho(t)I(\varepsilon_{0}\geq t)\{(\exp(\gamma_{0}^{T}Z)X^{T},Z^{T})-\eta(t,\theta_{0})\}^{T}dA(t)\right)\\
&\phantom{=}\;+o_{P^*}(1).
\end{align*}
\end{proof}

\begin{theorem}
  Let $\hat{\theta}_n$ be an estimator of $\theta_{0}$ such that
  $n^{1/2}(\hat{\theta}_n-\theta_{0})$ converges weakly to a zero mean
  normal distribution. Then
  $n^{1/2}\{\hat{A}_{n}(\hat{\theta}_{n},\cdot)-A_0(\cdot)\}$ converges
  weakly to a tight zero mean Gaussian process on $]-\infty,\tau]$ and
  the following holds
\begin{align*}
  n^{1/2}\{\hat{A}_{n}(\hat{\theta}_{n},t)-A_0(t)\}&=\dot{\phi}_{\theta}(t,\theta_{0})n^{1/2}(\hat{\theta}_{n}-\theta_{0})+\Gn H(t,\theta_0,d^{(0)},A_0)+o_{P^*}(1).
\end{align*}
\end{theorem}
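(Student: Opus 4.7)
The plan is to decompose
\[
n^{1/2}\{\hat{A}_n(\hat{\theta}_n, t) - A_0(t)\} = n^{1/2}\{\hat{A}_n(\theta_0, t) - A_0(t)\} + n^{1/2}\{\hat{A}_n(\hat{\theta}_n, t) - \hat{A}_n(\theta_0, t)\}
\]
and treat the two pieces separately: the first should reduce to $\Gn H$, and the second should yield the derivative term $\dot{\phi}_{\theta}(t,\theta_0)\, n^{1/2}(\hat{\theta}_n - \theta_0)$. A useful preliminary identification is $\phi(t,\theta_0)=A_0(t)$, which follows from the intensity $Y_i^*(s)\alpha(s)$ of $N_i^*$ at the truth, giving $E\{dN_i^*(s)\}=d^{(0)}(s,\theta_0)\,dA_0(s)$.

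For the first piece, I would introduce the zero-mean martingales $M_i^*(t)=N_i^*(t)-\int_{-\infty}^t Y_i^*(s)\,dA_0(s)$, whence $\sum_i Y_i^*(s)/\sum_j Y_j^*(s)=1$ makes the drift collapse and leaves
\[
\hat{A}_n(\theta_0, t) - A_0(t) = \int_{-\infty}^t \frac{d\bar{M}^*(s)}{D_n^{(0)}(s,\theta_0)}, \qquad \bar{M}^* = \Pn M_i^*.
\]
Linearising via $1/D_n^{(0)} = 1/d^{(0)} - (D_n^{(0)} - d^{(0)})/(D_n^{(0)}\,d^{(0)})$, the leading piece collapses to $\Pn H(\cdot; t,\theta_0,d^{(0)},A_0)$, which has mean zero by the martingale property, so $n^{1/2}\Pn H = \Gn H$. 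The remainder can be controlled by combining $\sup_s|D_n^{(0)}(s,\theta_0) - d^{(0)}(s,\theta_0)| = O_{P^*}(n^{-1/2})$, available from Lemma \ref{lemma:1}, with uniform-in-$t$ control on $\bar{M}^*$; after integration by parts to move the differential off the random integrand, the remainder is $o_{P^*}(n^{-1/2})$ uniformly in $t$.

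For the second piece, write
\[
\hat{A}_n(\hat{\theta}_n, t) - \hat{A}_n(\theta_0, t) = \{\phi(t,\hat{\theta}_n) - \phi(t,\theta_0)\} + R_n(t),
\]
where $R_n(t) = \{\hat{A}_n(\hat{\theta}_n, t) - \phi(t,\hat{\theta}_n)\} - \{\hat{A}_n(\theta_0, t) - \phi(t,\theta_0)\}$. A first-order Taylor expansion combined with $\hat{\theta}_n - \theta_0 = O_{P^*}(n^{-1/2})$ from Theorem \ref{thm:2} identifies the derivative term $\dot{\phi}_{\theta}(t,\theta_0)\, n^{1/2}(\hat{\theta}_n - \theta_0) + o_{P^*}(1)$. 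The hard part will be establishing $R_n(t) = o_{P^*}(n^{-1/2})$ uniformly in $t$: this is an asymptotic equicontinuity statement at $\theta_0$ for the process $\theta \mapsto n^{1/2}\{\hat{A}_n(\theta,\cdot) - \phi(\cdot,\theta)\}$. I would verify it by exhibiting the defining class---functions of the form $I(\varepsilon_\theta \leq t)\Delta D_n^{(0)}(\varepsilon_\theta,\theta)^{-1}$ indexed by $(t,\theta)$---as Donsker, using Lemma \ref{lemma:1} for the indicator factor and Donsker-preservation arguments (as in the proof of Theorem \ref{thm:1} based on \citet[Example 2.10.9]{vdvaartwellner96}) for the ratio, and then invoking the $\sqrt{n}$-rate of $\hat{\theta}_n$.

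Weak convergence then follows by assembling the pieces. The class $\{H(\cdot; t,\theta_0,d^{(0)},A_0) : t \in\,]-\infty,\tau]\}$ is Donsker by the same bracketing machinery as in Lemma \ref{lemma:1}, so $\Gn H$ converges weakly in $\ell^\infty(]-\infty,\tau])$ to a tight zero-mean Gaussian process; $n^{1/2}(\hat{\theta}_n - \theta_0)$ converges weakly to a zero-mean normal by hypothesis; and $t \mapsto \dot{\phi}_{\theta}(t,\theta_0)$ is continuous under the smoothness afforded by Assumptions \ref{ass:3}--\ref{ass:4}. The continuous mapping theorem then delivers the claimed tight Gaussian weak convergence on $]-\infty,\tau]$.
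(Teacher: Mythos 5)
Your architecture matches the paper's: split at $\theta_0$, show the $\theta_0$-piece equals $\Gn H(t,\theta_0,d^{(0)},A_0)+o_{P^*}(1)$, and extract $\dot{\phi}_{\theta}(t,\theta_0)n^{1/2}(\hat{\theta}_n-\theta_0)$ from the $\theta$-dependence (the identification $\phi(t,\theta_0)=A_0(t)$ is correct and is used implicitly in the paper as well). The genuine gap is your control of the remainder in the first piece. After linearising, the remainder is $n^{1/2}\int_{-\infty}^{t} f_n\,d\bar{M}^*$ with $f_n=1/d^{(0)}-1/D_n^{(0)}$, and the bounds you invoke do not deliver the conclusion: integration by parts gives a boundary term $n^{1/2}f_n(t)\bar{M}^*(t)=O_{P^*}(n^{-1/2})$, which is fine, but the remaining term is only bounded by $n^{1/2}\sup_s|\bar{M}^*(s)|\cdot\mathrm{TV}(f_n)$, and $\mathrm{TV}(f_n)$ is $O_{P^*}(1)$ but not $o_{P^*}(1)$ (the continuous increase of $1/d^{(0)}$ and the jump increase of $1/D_n^{(0)}$ add rather than cancel in total variation), so the chain yields $O_{P^*}(1)$, not the required $o_{P^*}(1)$. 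The statement you need is true, but it requires a different argument: either a martingale argument (the integrand is predictable and the predictable variation of $n^{1/2}\int f_n\,d\bar{M}^*$ is $\int f_n^{2}D_n^{(0)}dA_0=O_{P^*}(n^{-1})$, whence Lenglart's inequality gives uniform negligibility), or the paper's empirical-process route: write the remainder as $-\Gn g_n$ with $g_n(O)=f_n(\varepsilon_0)I(\varepsilon_0\leq t)\Delta-\int_{-\infty}^{t}f_n(s)I(\varepsilon_0\geq s)dA_0(s)$, whose mean (with $f_n$ frozen) vanishes, note that $g_n$ lies in a fixed Donsker class (bounded monotone-difference weight functions combined with the classes of Lemma \ref{lemma:1}) with $\|g_n\|_{L_2(P)}\to 0$, and conclude by asymptotic equicontinuity; this is exactly how the paper disposes of the corresponding terms in (\ref{eq:16}).

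A secondary point: in the second piece, $n^{1/2}\{\hat{A}_n(\theta,t)-\phi(t,\theta)\}$ is not $\Gn$ applied to the class you display, because $\hat{A}_n$ carries the random denominator $D_n^{(0)}$ while $\phi$ carries $d^{(0)}$. It decomposes as $\Gn\{I(\varepsilon_{\theta}\leq t)\Delta/D_n^{(0)}(\varepsilon_{\theta},\theta)\}$ plus the drift $n^{1/2}P[I(\varepsilon_{\theta}\leq t)\Delta\{1/D_n^{(0)}(\varepsilon_{\theta},\theta)-1/d^{(0)}(\varepsilon_{\theta},\theta)\}]$, and the drift is $O_{P^*}(1)$, not negligible; your equicontinuity claim for $R_n$ therefore also requires showing that the $\theta$-increments of this drift over $n^{-1/2}$-neighbourhoods of $\theta_0$ are $o_{P^*}(1)$. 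That is precisely the content of the paper's expansions in its proof of this theorem (the analogues of (\ref{eq:8}) and (\ref{eq:13}), which produce $\dot{\phi}_{\theta}$ through $\dot{d}^{(0)}_{\theta}$ and $\dot{h}_{\theta}$), so Donskerness of the displayed class alone does not establish $R_n(t)=o_{P^*}(n^{-1/2})$; the additional drift analysis must be carried out.
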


\begin{proof}
Let $\|\theta-\theta_{0}\|<K n^{-1/2}$ for some $K<\infty$. Then note that

\begin{equation*}
  n^{1/2}\{\hat{A}(t,\theta)-\hat{A}(t,\theta_{0})\}=n^{1/2}\Pn \left\{\frac{I(\varepsilon_{\theta}\leq t)\Delta}{D_n^{(0)}(\varepsilon_{\theta},\theta)}-\frac{I(\varepsilon_{0}\leq t)\Delta}{D_n^{(0)}(\varepsilon_{0},\theta_{0})}\right\}.
\end{equation*}
As in (\ref{eq:8}) in the proof of Theorem \ref{thm:2} one may show that
\begin{align*}
\MoveEqLeft n^{1/2}\Pn \left[ I(\varepsilon_{\theta}\leq
  t)\Delta\left\{\frac{1}{D_n^{(0)}(\varepsilon_{\theta},\theta)}-\frac{1}{D_n^{(0)}(\varepsilon_{0},\theta_{0})}\right\}\right]\\
&=-n^{1/2}P\left[ I(\varepsilon_{0}\leq
  t)\Delta\frac{\dot{d}_{\theta}^{(0)}(\varepsilon_{0},\theta_{0})}{d^{(0)}(\varepsilon_{0},\theta_{0})^{2}}\right](\theta-\theta_{0})+o_{P^*}(1).
\end{align*}
As in (\ref{eq:13}) in the proof of Theorem \ref{thm:2} one may also show that
\begin{align*}
n^{1/2}\Pn \left[\frac{\Delta\{I(\varepsilon_{\theta}\leq t)-I(\varepsilon_{0}\leq t)\}}{D_n^{(0)}(\varepsilon_{0},\theta_{0})}\right]=n^{1/2}P\left\{\frac{\dot{h}_{\theta}(t,\theta_{0},X,Z)}{d^{(0)}(t,\theta_{0})}\right\}(\theta-\theta_{0})+o_{P^*}(1).
\end{align*}
Combining the displays above, we get
\begin{equation}\label{eq:15}
n^{1/2}\{\hat{A}(t,\hat{\theta}_{n})-\hat{A}(t,\theta_{0})\}=\dot{\phi}_{\theta}(t,\theta_{0})n^{1/2}(\hat{\theta}_{n}-\theta_{0})+o_{P^*}(1).
\end{equation}

Secondly note that
\begin{align}\label{eq:16}
\nonumber \MoveEqLeft n^{1/2}\{\hat{A}(t,\theta_{0})-A_0(t)\}\\
\nonumber&=\Gn \left[ I(\varepsilon_{0}\leq t)\Delta \left\{\frac{1}{D_n^{(0)}(\varepsilon_{0},\theta_{0})}-\frac{1}{d^{(0)}(\varepsilon_{0},\theta_{0})}\right\}\right]\\
&\phantom{=}\;+n^{1/2}P \left[ I(\varepsilon_{0}\leq t)\Delta \left\{\frac{1}{D_n^{(0)}(\varepsilon_{0},\theta_{0})}-\frac{1}{d^{(0)}(\varepsilon_{0},\theta_{0})}\right\}\right]+\Gn \left\{\frac{I(\varepsilon_{0}\leq t)\Delta}{d^{(0)}(\varepsilon_{0},\theta_{0})}\right\}.
\end{align}
By similar arguments as in the proof of Theorem \ref{thm:2} one may show that the first term on the right hand side of (\ref{eq:16}) converges in probability to $0$.  For the second term note that

\begin{align*}
\MoveEqLeft n^{1/2}P \left[ I(\varepsilon_{0}\leq t)\Delta \left\{\frac{1}{D_n^{(0)}(\varepsilon_{0},\theta_{0})}-\frac{1}{d^{(0)}(\varepsilon_{0},\theta_{0})}\right\}\right]\\
&=n^{1/2}\int_{-\infty}^{t} \left\{\frac{1}{D_n^{(0)}(s,\theta_{0})}-\frac{1}{d^{(0)}(s,\theta_{0})}\right\}dP_{\varepsilon_{0},\Delta}(s,1)\\
&=-n^{1/2}\int_{-\infty}^{t}\frac{1}{D_n^{(0)}(s,\theta_{0})d^{(0)}(s,\theta_{0})}\left\{D^{(0)}(s,\theta_{0})-d^{(0)}(s,\theta_{0})\right\}dP_{\varepsilon_{0},\Delta}(s,1)\\
&=-n^{1/2}\int_{-\infty}^{t}\frac{1}{D_n^{(0)}(s,\theta_{0})}\left\{D_n^{(0)}(s,\theta_{0})-d^{(0)}(s,\theta_{0})\right\}dA(s)\\
&=-\int_{-\infty}^{t}\frac{1}{d^{(0)}(s,\theta_{0})}\Gn I(\varepsilon_{0}\geq s)dA_0(s)+o_{P^*}(1)\\
&=-\Gn\int_{-\infty}^{t}\frac{1}{d^{(0)}(s,\theta_{0})}I(\varepsilon_{0}\geq s)dA_0(s)+o_{P^*}(1),
\end{align*}
from which it follows by Lemma \ref{lemma:1} that the second term
converges weakly to a tight zero mean Gaussian process. It also
follows from Lemma \ref{lemma:1} that the third term in (\ref{eq:16}) converges
weakly to a tight zero mean Gaussian process.  Combining (\ref{eq:15})
and (\ref{eq:16}) we have that
$n^{1/2}\{\hat{A}(\hat{\theta}_{n},\cdot)-A(\cdot)\}$ converges weakly
to a tight zero mean Gaussian process on $]-\infty,\tau]$ and that
\begin{align*}
\MoveEqLeft n^{1/2}\{\hat{A}_{n}(\hat{\theta}_{n},t)-A_0(t)\}=\\
&\dot{\phi}_{\theta}(t,\theta_{0})n^{1/2}(\hat{\theta}_{n}-\theta_{0})+\Gn\left\{\frac{I(\varepsilon_{0}\leq t)}{\Delta d^{(0)}(\varepsilon_{0},\theta_{0})} -\int_{-\infty}^{t}\frac{I(\varepsilon_{0}\geq s)}{d^{(0)}(s,\theta)}dA_0(s) \right\}+o_{P^*}(1).
\end{align*}

\end{proof}

\section{Conditional multiplier method}

While it is straightforward to compute confidence intervals for
$\theta_0$ and $A_0(t)$ (once we can estimate the limiting variances as
discussed in the main paper), we now discuss how confidence intervals
for functions of $\theta_0$ and $A_0(t)$ can be produced. For this, we
rely on empirical process theory via the conditional multiplier method
(cf.~\citet{vdvaartwellner96}).

From empirical process theory we have that for a Gaussian vector
$G=(G_1,\ldots,G_n)$, the limiting distribution of
\begin{align*}
-\dot{\Psi}_{\theta}^{-1} n^{-1/2} \sum_{i=1}^n J(O_i) G_i
\end{align*}
is the same as that of $n^{1/2}(\hat{\theta}_{n}-\theta_0)$. Defining
$G^b$ to be one such randomly generated Gaussian vector
($b = 1,\ldots,m$), we may compute
\begin{align}
\hat\theta^b = \hat\theta_n-\hat{\dot{\Psi}}_{\theta}^{-1} n^{-1} \sum_{i=1}^n \hat J_i G_i^b. \label{parsim}
\end{align}
Hence, generating a sample of $G^b$ vectors produces a sample of
$\hat \theta^b$ vectors. Note that the quantiles of the sample
$\{\hat \theta_j^1, \ldots, \hat \theta_j^m\}$ may be used to form
confidence intervals for $\theta_j$.

We now turn to $A_0(t)$ where we have that
\begin{align*}
n^{-1/2} \sum_{i=1}^n \left[-\dot{\phi}_{\theta}(t) \dot{\Psi}_{\theta}^{-1} J(O_i) + H(O_i;t) \right] G_i
\end{align*}
has the same limiting distribution as
$n^{1/2}\{\hat{A}_{n}(\hat{\theta}_{n},t)-A_0(t)\}$. However, it is
well known that transforming to the unrestricted $\log A_0(t)$ scale
(and subsequently back-transforming) is preferable. Thus, we consider
\begin{align*}
n^{-1/2} \frac{1}{ \hat A_n(\hat\theta_n, t) } \sum_{i=1}^n \left[-{\dot{\phi}_{\theta}}(t) {\dot{\Psi}}_{\theta}^{-1} J(O_i) + H(O_i;t) \right] G_i
\end{align*}
which has the same limiting distribution as
$n^{1/2}\{\log \hat{A}_{n}(\hat{\theta}_{n},t)-\log A_0(t)\}$. Hence,
computing
\begin{align}
  \hat{A}^b(t) = \hat{A}_n(\hat{\theta}_n,t)\exp\left\{n^{-1} \frac{1}{\hat A_n(\hat\theta_n, t) } \sum_{i=1}^n \left[-\hat{\dot{\phi}}_{\theta}(t) \hat{\dot{\Psi}}_{\theta}^{-1} \hat J_i + \hat H_i(t) \right] G_i^b\right\} \label{Asim}
\end{align}
for $b=1,\ldots,m$ creates a sample whose quantiles may be used to produce confidence bands for $A_0(t)$.

%Although, as mentioned above, (\ref{parsim}) and (\ref{Asim}) can be used to produce confidence intervals for $\theta$ and $A(t)$ respectively, this is unnecessary given that we have established asymptotic normality of the estimators $\hat\theta$ and $\hat A(\hat\theta, t)$, and have proposed estimators of the limiting variances.

When applying (\ref{parsim}) and (\ref{Asim}) above, we must maintain
the same set of Gaussian vectors, $\{G^1,\ldots,G^m\}$, i.e., both
$\hat\theta^b$ and $\hat{A}^b(t)$ are generated from the same Gaussian
vector $G^b$ (for $b=1,\ldots,m$). This has the effect of respecting
the dependence structure between the estimators $\hat \theta_n$ and
$\hat{A}_n(\hat{\theta}_n,t)$ which propagates into any functions of these
estimates. Because the limiting distribution of
$\{\hat\theta^b, \hat A^b(t)\}$ is the same as that of
$\{\hat\theta_n, \hat A_n(\hat\theta_n, t)\}$, we have, from the
continuous mapping theorem, that the limiting distribution of
$w\{\hat\theta^b, \hat A^b(t)\}$ is the same as that of
$w\{\hat\theta_n, \hat A_n(\hat\theta_n, t)\}$ where $w(\cdot,\cdot)$
is a continuous function of the parameters and error cumulative hazard
function. Hence, from the simulated sample
$\{\hat\theta^b, \hat A^b(t)\}$, $b=1,\ldots,m$, we may produce
confidence bands for any functional of interest. As an example,
consider the conditional survivor function for our proposed model
which is given by
\begin{align*}
S(t\,|\,x_i,z_i) = \exp\left\{-A_0\left( \frac{\log t - \mu_{i0}}{\sigma_{i0}} \right)\right\}
\end{align*}
where $\mu_{i0} = - x_i^T \beta_0$ and $\sigma_{i0} = \exp(-z_i^T\gamma_0)$.
Hence, we can compute
\begin{align*}
\hat S^b(t\,|\,x_i,z_i) = \exp\left\{-\hat A^b\left( \frac{\log t - \hat\mu^b_i}{\hat\sigma^b_i} \right)\right\}
\end{align*}
where $\hat\mu^b_i = - x_i^T \hat\beta^b$ and
$\hat\sigma^b_i = \exp(-z_i^T\hat\gamma^b)$, $b=1,\ldots,m$, from
which confidence bands can be produced.

\section{Additional simulation results}

In Section 4.1 of the main paper, we presented a subset of a larger simulation study, the results of which are contained here. The details of the full simulation study are as described in the main paper with the addition of the sample sizes $n=50$ and $n=500$, and, furthermore, the Gehan weight, $\rho(u) = \sum_{j=1}^nY_j^*(u)/n$, was also considered.

%the full group of weight functions considered are as follows:
%\begin{itemize}
%\item Log-rank-type: $\rho(u) = 1$ in both the suggested, linked form, i.e., $\rho_{\beta}(u) = \rho(u)$ and $\rho_{\gamma}(u)=u\rho(u)+1$, and the naive form, i.e., $\rho_{\beta}(u) = \rho_{\gamma}(u) = \rho(u)$.
%\item Gehan-type: $\rho(u) = \sum_{j=1}^nY_j^*(u)/n$ again in both the suggested, linked form, and the naive form.
%\item Normal: $\rho(u) = \varphi(u)/(1-\phi(u)) - u$ where $\varphi$ and $\phi$ are the normal pdf and cdf functions. We only considered this choice in linked form since, in our simulation study, this yields the true, efficient weights.
%\end{itemize}

Tables \ref{tab:reslr} - \ref{tab:reseff} below display the bias and coverage percentages for each of the three weight function types, while Tables \ref{tab:lreff} - \ref{tab:effeff} show the empirical and estimated standard errors. In all cases the bias is low, the coverage is close to the nominal level, and our proposed variance estimators are adequately capturing the true variations in estimation (and, indeed, the efficiency is similar across the three weight function choices).

\begin{table}[!htbp]
\begin{center}
\caption{Log-rank bias and coverage\label{tab:reslr}}
\centering
\begin{tabular}{cccrrrrrr}
\hline
&&& \multicolumn{2}{c}{$n=50$} & \multicolumn{2}{c}{$n=100$} & \multicolumn{2}{c}{$n=500$} \\
$\tau$ & Cens. & Parameter & Bias & Cov. & Bias & Cov. & Bias & Cov. \\
\hline
2        & 20\% & $\beta_1$  &  0.004 & 94.6 &  0.000 & 94.8 &  0.000 & 95.0 \\
               && $\beta_2$  &  0.017 & 94.3 &  0.008 & 94.0 & -0.002 & 94.8 \\
               && $\gamma_1$ & -0.033 & 93.8 & -0.001 & 94.1 &  0.000 & 94.6 \\
               && $A$        & -0.016 & 94.9 & -0.009 & 94.5 & -0.002 & 95.1 \\
               && $S$        &  0.000 & 96.6 &  0.000 & 94.6 &  0.000 & 94.8 \\
               && $r$        &  0.007 & 94.8 &  0.003 & 95.1 &  0.001 & 95.0 \\
2        & 50\% & $\beta_1$  & -0.004 & 95.6 &  0.000 & 94.2 &  0.000 & 94.8 \\
               && $\beta_2$  &  0.029 & 95.2 &  0.005 & 93.3 & -0.001 & 94.1 \\
               && $\gamma_1$ & -0.064 & 93.1 & -0.015 & 94.6 & -0.004 & 94.3 \\
               && $A$        & -0.018 & 93.5 & -0.010 & 95.3 & -0.003 & 94.5 \\
               && $S$        &  0.004 & 97.1 &  0.008 & 95.3 &  0.001 & 94.5 \\
               && $r$        &  0.016 & 95.0 &  0.006 & 95.5 &  0.000 & 94.8 \\
$\infty$ & 20\% & $\beta_1$  &  0.000 & 95.1 &  0.000 & 94.3 & -0.001 & 95.4 \\
               && $\beta_2$  &  0.017 & 94.8 &  0.003 & 93.6 &  0.000 & 94.6 \\
               && $\gamma_1$ & -0.036 & 93.9 & -0.007 & 93.4 &  0.002 & 94.3 \\
               && $A$        & -0.012 & 94.7 & -0.004 & 94.3 & -0.001 & 94.9 \\
               && $S$        & -0.002 & 96.2 &  0.001 & 94.5 &  0.000 & 95.2 \\
               && $r$        &  0.006 & 95.1 &  0.004 & 95.3 &  0.001 & 95.2 \\
$\infty$ & 50\% & $\beta_1$  & -0.005 & 95.0 &  0.003 & 94.5 & -0.001 & 94.9 \\
               && $\beta_2$  &  0.031 & 95.2 &  0.001 & 93.1 &  0.000 & 94.5 \\
               && $\gamma_1$ & -0.066 & 92.8 & -0.016 & 93.7 &  0.000 & 94.4 \\
               && $A$        & -0.016 & 94.5 & -0.010 & 95.1 & -0.002 & 94.8 \\
               && $S$        &  0.002 & 97.3 &  0.009 & 95.3 &  0.002 & 94.6 \\
               && $r$        &  0.018 & 94.9 &  0.005 & 95.4 &  0.001 & 94.7 \\
\hline
\end{tabular}

{\footnotesize
Cens., censored proportion; Bias, median bias; Cov., empirical coverage percentage for 95\% confidence interval; $A = A(0)$; $S=S(t^{(1)}_{0.5}\,|\, x^{(1)})$; $r = r(x^{(1)},x^{(2)})$.}
\end{center}
\end{table}

\begin{table}[p]
\begin{center}
\caption{Gehan bias and coverage\label{tab:resg}}
\begin{tabular}{cccrrrrrr}
\hline
&&& \multicolumn{2}{c}{$n=50$} & \multicolumn{2}{c}{$n=100$} & \multicolumn{2}{c}{$n=500$} \\
$\tau$ & Cens. & Parameter & Bias & Cov. & Bias & Cov. & Bias & Cov. \\
\hline
2        & 20\% & $\beta_1$  &  0.000  & 95.5  &  0.003  & 95.6  &  0.000  & 94.7 \\
               && $\beta_2$  &  0.015  & 95.2  &  0.002  & 94.7  &  0.000  & 94.7 \\
               && $\gamma_1$ & -0.003  & 95.2  &  0.009  & 94.4  &  0.001  & 95.0 \\
               && $A$        & -0.006  & 94.9  & -0.008  & 94.7  & -0.002  & 94.3 \\
               && $S$        &  0.003  & 96.2  & -0.001  & 95.5  & -0.001  & 95.1 \\
               && $r$        &  0.010  & 95.7  &  0.001  & 95.5  &  0.001  & 94.4 \\
2        & 50\% & $\beta_1$  & -0.010  & 95.7  &  0.000  & 94.6  &  0.001  & 95.4 \\
               && $\beta_2$  &  0.031  & 95.2  &  0.007  & 93.6  & -0.001  & 95.1 \\
               && $\gamma_1$ & -0.018  & 95.1  &  0.012  & 94.4  &  0.007  & 94.8 \\
               && $A$        & -0.003  & 94.4  & -0.005  & 94.4  & -0.002  & 94.8 \\
               && $S$        &  0.000  & 97.6  &  0.003  & 95.4  &  0.000  & 95.0 \\
               && $r$        &  0.022  & 95.4  &  0.007  & 95.1  &  0.000  & 95.1 \\
$\infty$ & 20\% & $\beta_1$  & -0.004  & 95.1  &  0.000  & 94.5  &  0.001  & 94.9 \\
               && $\beta_2$  &  0.013  & 95.6  &  0.002  & 93.8  & -0.002  & 95.0 \\
               && $\gamma_1$ & -0.008  & 95.0  &  0.008  & 94.3  &  0.004  & 94.6 \\
               && $A$        &  0.001  & 94.8  &  0.003  & 95.1  & -0.001  & 95.0 \\
               && $S$        &  0.005  & 96.3  &  0.003  & 94.7  &  0.001  & 95.1 \\
               && $r$        &  0.014  & 95.5  &  0.007  & 95.4  &  0.000  & 95.1 \\
$\infty$ & 50\% & $\beta_1$  & -0.007  & 95.6  & -0.005  & 94.3  & -0.001  & 94.7 \\
               && $\beta_2$  &  0.028  & 95.4  &  0.009  & 93.2  &  0.002  & 94.6 \\
               && $\gamma_1$ & -0.027  & 94.7  &  0.012  & 92.9  &  0.003  & 94.8 \\
               && $A$        & -0.009  & 94.6  & -0.001  & 94.7  & -0.003  & 94.7 \\
               && $S$        & -0.002  & 97.1  &  0.001  & 95.2  &  0.000  & 95.4 \\
               && $r$        &  0.017  & 95.5  &  0.010  & 95.1  &  0.002  & 94.7 \\
\hline
\end{tabular}

{\footnotesize
Cens., censored proportion; Bias, median bias; Cov., empirical coverage percentage for 95\% confidence interval; $A = A(0)$; $S=S(t^{(1)}_{0.5}\,|\, x^{(1)})$; $r = r(x^{(1)},x^{(2)})$.}
\end{center}
\end{table}

\begin{table}[p]
\begin{center}
\caption{Normal (true efficient) bias and coverage\label{tab:reseff}}
\centering
\begin{tabular}{cccrrrrrr}
\hline
&&& \multicolumn{2}{c}{$n=50$} & \multicolumn{2}{c}{$n=100$} & \multicolumn{2}{c}{$n=500$} \\
$\tau$ & Cens. & Parameter & Bias & Cov. & Bias & Cov. & Bias & Cov. \\
\hline
2        & 20\% & $\beta_1$  & -0.005 & 94.2 & -0.001 & 94.4 & -0.001 & 94.7 \\
               && $\beta_2$  &  0.009 & 93.9 &  0.002 & 93.1 &  0.001 & 94.6 \\
               && $\gamma_1$ & -0.011 & 91.8 &  0.004 & 91.8 &  0.002 & 94.9 \\
               && $A$        & -0.009 & 94.1 & -0.002 & 94.4 &  0.000 & 94.3 \\
               && $S$        &  0.000 & 95.3 &  0.000 & 94.3 & -0.001 & 94.5 \\
               && $r$        &  0.012 & 95.2 &  0.004 & 94.8 &  0.001 & 94.6 \\
2        & 50\% & $\beta_1$  & -0.013 & 95.2 & -0.006 & 94.8 &  0.000 & 95.3 \\
               && $\beta_2$  &  0.037 & 93.8 &  0.003 & 93.7 & -0.001 & 95.0 \\
               && $\gamma_1$ & -0.030 & 92.2 &  0.004 & 93.2 & -0.001 & 93.9 \\
               && $A$        & -0.015 & 94.7 & -0.003 & 94.8 & -0.001 & 95.0 \\
               && $S$        & -0.002 & 97.3 &  0.005 & 95.5 &  0.001 & 95.5 \\
               && $r$        &  0.020 & 96.3 &  0.008 & 95.9 &  0.001 & 94.8 \\
$\infty$ & 20\% & $\beta_1$  & -0.006 & 94.1 & -0.001 & 94.3 & -0.001 & 94.5 \\
               && $\beta_2$  &  0.017 & 93.1 & -0.002 & 93.4 &  0.000 & 95.1 \\
               && $\gamma_1$ &  0.000 & 91.5 &  0.003 & 93.2 &  0.005 & 94.2 \\
               && $A$        & -0.008 & 94.4 & -0.005 & 94.2 &  0.000 & 95.2 \\
               && $S$        &  0.000 & 95.2 &  0.004 & 94.4 &  0.000 & 94.6 \\
               && $r$        &  0.010 & 95.2 &  0.005 & 94.6 &  0.001 & 94.9 \\
$\infty$ & 50\% & $\beta_1$  & -0.012 & 95.2 &  0.000 & 94.1 &  0.000 & 95.0 \\
               && $\beta_2$  &  0.027 & 94.2 & -0.001 & 92.6 & -0.002 & 94.6 \\
               && $\gamma_1$ & -0.022 & 92.1 &  0.008 & 92.0 &  0.000 & 94.0 \\
               && $A$        & -0.011 & 94.1 & -0.008 & 94.5 & -0.003 & 94.7 \\
               && $S$        &  0.001 & 96.9 &  0.005 & 94.4 &  0.002 & 95.0 \\
               && $r$        &  0.021 & 95.8 &  0.007 & 95.3 &  0.001 & 95.1 \\
\hline
\end{tabular}

{\footnotesize
Cens., censored proportion; Bias, median bias; Cov., empirical coverage percentage for 95\% confidence interval; $A = A(0)$; $S=S(t^{(1)}_{0.5}\,|\, x^{(1)})$; $r = r(x^{(1)},x^{(2)})$.}
\end{center}
\end{table}

\newpage

\begin{table}[p]
\begin{center}
\caption{Log-rank standard errors \label{tab:lreff}}
\centering
\begin{tabular}{cccrrrrrr}
\hline
&&& \multicolumn{2}{c}{$n=50$} & \multicolumn{2}{c}{$n=100$} & \multicolumn{2}{c}{$n=500$} \\
$\tau$ & Cens. & Parameter & SE & SEE & SE & SEE & SE & SEE \\
\hline
2        & 20\% & $\beta_1$  & 0.148  & 0.142 & 0.099 & 0.099 & 0.045 & 0.044 \\
               && $\beta_2$  & 0.239  & 0.261 & 0.189 & 0.185 & 0.087 & 0.085 \\
               && $\gamma_1$ & 0.237  & 0.256 & 0.174 & 0.179 & 0.081 & 0.082 \\
               && $A$        & 0.252  & 0.223 & 0.168 & 0.156 & 0.071 & 0.071 \\
2        & 50\% & $\beta_1$  & 0.186  & 0.182 & 0.127 & 0.124 & 0.056 & 0.055 \\
               && $\beta_2$  & 0.264  & 0.301 & 0.216 & 0.210 & 0.099 & 0.097 \\
               && $\gamma_1$ & 0.296  & 0.327 & 0.208 & 0.230 & 0.107 & 0.105 \\
               && $A$        & 0.355  & 0.290 & 0.219 & 0.203 & 0.095 & 0.092 \\
$\infty$ & 20\% & $\beta_1$  & 0.149  & 0.142 & 0.100 & 0.099 & 0.044 & 0.044 \\
               && $\beta_2$  & 0.240  & 0.256 & 0.189 & 0.183 & 0.085 & 0.085 \\
               && $\gamma_1$ & 0.236  & 0.252 & 0.175 & 0.177 & 0.082 & 0.081 \\
               && $A$        & 0.245  & 0.221 & 0.165 & 0.156 & 0.072 & 0.071 \\
$\infty$ & 50\% & $\beta_1$  & 0.200  & 0.183 & 0.126 & 0.124 & 0.056 & 0.055 \\
               && $\beta_2$  & 0.268  & 0.298 & 0.215 & 0.208 & 0.099 & 0.097 \\
               && $\gamma_1$ & 0.306  & 0.328 & 0.215 & 0.230 & 0.106 & 0.105 \\
               && $A$        & 0.361  & 0.293 & 0.230 & 0.204 & 0.094 & 0.093 \\
\hline
\end{tabular}

{\footnotesize
SE, standard error of estimates; SEE, median of estimated standard error.}
\end{center}
\end{table}

\begin{table}[p]
\begin{center}
\caption{Gehan standard errors \label{tab:geff}}
\centering
\begin{tabular}{cccrrrrrr}
\hline
&&& \multicolumn{2}{c}{$n=50$} & \multicolumn{2}{c}{$n=100$} & \multicolumn{2}{c}{$n=500$} \\
$\tau$ & Cens. & Parameter & SE & SEE & SE & SEE & SE & SEE \\
\hline
2        & 20\% & $\beta_1$  & 0.137 & 0.142 & 0.096 & 0.099 & 0.044 & 0.044 \\
               && $\beta_2$  & 0.229 & 0.253 & 0.177 & 0.178 & 0.080 & 0.080 \\
               && $\gamma_1$ & 0.243 & 0.249 & 0.178 & 0.174 & 0.077 & 0.077 \\
               && $A$        & 0.244 & 0.223 & 0.166 & 0.156 & 0.072 & 0.070 \\
2        & 50\% & $\beta_1$  & 0.174 & 0.182 & 0.123 & 0.123 & 0.054 & 0.054 \\
               && $\beta_2$  & 0.250 & 0.289 & 0.207 & 0.201 & 0.092 & 0.091 \\
               && $\gamma_1$ & 0.294 & 0.318 & 0.229 & 0.225 & 0.102 & 0.100 \\
               && $A$        & 0.337 & 0.289 & 0.229 & 0.201 & 0.093 & 0.091 \\
$\infty$ & 20\% & $\beta_1$  & 0.139 & 0.144 & 0.098 & 0.099 & 0.044 & 0.044 \\
               && $\beta_2$  & 0.229 & 0.255 & 0.180 & 0.178 & 0.079 & 0.080 \\
               && $\gamma_1$ & 0.242 & 0.247 & 0.176 & 0.174 & 0.078 & 0.077 \\
               && $A$        & 0.245 & 0.223 & 0.161 & 0.156 & 0.071 & 0.070 \\
$\infty$ & 50\% & $\beta_1$  & 0.177 & 0.181 & 0.127 & 0.123 & 0.055 & 0.054 \\
               && $\beta_2$  & 0.252 & 0.289 & 0.207 & 0.202 & 0.092 & 0.092 \\
               && $\gamma_1$ & 0.301 & 0.320 & 0.234 & 0.223 & 0.101 & 0.100 \\
               && $A$        & 0.335 & 0.286 & 0.226 & 0.201 & 0.093 & 0.091 \\
\hline
\end{tabular}

{\footnotesize
SE, standard error of estimates; SEE, median of estimated standard error.}
\end{center}
\end{table}

\begin{table}[p]
\begin{center}
\caption{Normal (true, efficient) standard errors \label{tab:effeff}}
\centering
\begin{tabular}{cccrrrrrr}
\hline
&&& \multicolumn{2}{c}{$n=50$} & \multicolumn{2}{c}{$n=100$} & \multicolumn{2}{c}{$n=500$} \\
$\tau$ & Cens. & Parameter & SE & SEE & SE & SEE & SE & SEE \\
\hline
2        & 20\% & $\beta_1$  & 0.141 & 0.137 & 0.097 & 0.096 & 0.043 & 0.043 \\
               && $\beta_2$  & 0.240 & 0.236 & 0.178 & 0.169 & 0.079 & 0.078 \\
               && $\gamma_1$ & 0.244 & 0.217 & 0.176 & 0.157 & 0.074 & 0.073 \\
               && $A$        & 0.242 & 0.216 & 0.164 & 0.153 & 0.071 & 0.069 \\
2        & 50\% & $\beta_1$  & 0.181 & 0.179 & 0.120 & 0.121 & 0.054 & 0.054 \\
               && $\beta_2$  & 0.285 & 0.274 & 0.201 & 0.193 & 0.089 & 0.088 \\
               && $\gamma_1$ & 0.319 & 0.292 & 0.215 & 0.206 & 0.100 & 0.096 \\
               && $A$        & 0.343 & 0.285 & 0.217 & 0.201 & 0.092 & 0.090 \\
$\infty$ & 20\% & $\beta_1$  & 0.142 & 0.136 & 0.098 & 0.096 & 0.043 & 0.043 \\
               && $\beta_2$  & 0.247 & 0.235 & 0.179 & 0.169 & 0.078 & 0.077 \\
               && $\gamma_1$ & 0.249 & 0.216 & 0.171 & 0.156 & 0.074 & 0.073 \\
               && $A$        & 0.246 & 0.213 & 0.165 & 0.152 & 0.069 & 0.069 \\
$\infty$ & 50\% & $\beta_1$  & 0.177 & 0.179 & 0.124 & 0.121 & 0.053 & 0.054 \\
               && $\beta_2$  & 0.279 & 0.274 & 0.207 & 0.191 & 0.089 & 0.088 \\
               && $\gamma_1$ & 0.305 & 0.291 & 0.224 & 0.206 & 0.098 & 0.096 \\
               && $A$        & 0.336 & 0.289 & 0.223 & 0.201 & 0.093 & 0.091 \\
\hline
\end{tabular}

{\footnotesize
SE, standard error of estimates; SEE, median of estimated standard error.}
\end{center}
\end{table}

\clearpage

\section{Lung cancer analysis}

Below is the table of estimated coefficients and standard errors using the log-rank, Gehan, and normal weight functions for the lung cancer data presented in the main paper; the estimates and standard errors are very similar in all cases.

\begin{table}[!htbp]
\caption{Lung Cancer}
\centering
\begin{tabular}{ll@{\qquad}rc@{\qquad}rc@{\qquad}rc}
\hline
&       & \multicolumn{2}{c}{Log-rank} & \multicolumn{2}{c}{Gehan} & \multicolumn{2}{c}{Normal} \\
& Group & Est. & SE & Est. & SE  & Est. & SE \\
\hline
Scale & Palliative   &  0.000 &   --- &  0.000 &   --- &  0.000 &   --- \\
      & Surgery      & -2.645 & 0.174 & -2.644 & 0.174 & -2.623 & 0.173 \\
      & Chemotherapy & -0.537 & 0.267 & -0.487 & 0.272 & -0.536 & 0.219 \\
      & Radiotherapy & -1.075 & 0.104 & -1.076 & 0.103 & -1.010 & 0.111 \\
      & Chemo\&Radio & -1.868 & 0.118 & -1.867 & 0.120 & -1.866 & 0.113 \\[0.2cm]
Shape & Palliative   &  0.000 &   --- &  0.000 &   --- &  0.000 &   --- \\
      & Surgery      &  0.308 & 0.195 &  0.296 & 0.194 &  0.298 & 0.168 \\
      & Chemotherapy &  0.040 & 0.115 & -0.006 & 0.100 &  0.053 & 0.093 \\
      & Radiotherapy &  0.296 & 0.072 &  0.271 & 0.069 &  0.216 & 0.073 \\
      & Chemo\&Radio &  0.943 & 0.173 &  0.944 & 0.157 &  0.944 & 0.148 \\
\hline
\end{tabular}
\end{table}

\end{document}